\documentclass[superscriptaddress,nofootinbib,tightenlines,preprint]{revtex4-1}
\pdfoutput=1
\usepackage[utf8]{inputenc}

\usepackage[pdftex,breaklinks,colorlinks,linktocpage,
citecolor=blue,
urlcolor=blue]{hyperref}

\usepackage{bm}
\usepackage{amssymb}
\usepackage{amsfonts}
\usepackage{mathrsfs}
\usepackage{amsmath}
\usepackage{amsthm}
\usepackage{xcolor}
\usepackage{bbold}
\usepackage{braket}
\usepackage{physics}
\usepackage{graphicx}
\setlength{\parindent}{0pt}
\usepackage{tensor}

\newcommand{\be}{\begin{equation}}
\newcommand{\ee}{\end{equation}}
\newcommand{\bea}{\begin{eqnarray}}
\newcommand{\eea}{\end{eqnarray}}
\def\bml{\begin{subequations}}
\def\blea{\bml\begin{eqnarray}}
\def\eml{\end{subequations}}
\def\elea{\end{eqnarray}\eml}

\theoremstyle{definition}
\newtheorem{theorem}{Theorem}[section]
\newtheorem{prop}[theorem]{Proposition}
\newtheorem{rem}[theorem]{Remark}
\newtheorem{cor}[theorem]{Corollary}
\newtheorem{lem}[theorem]{Lemma}
\newtheorem{defn}[theorem]{Definition}

\newcommand{\x}{\mathbf{x}}

\def\fmax{\phi_{\text{max}}}
\newcommand{\nord}[1]{{:}#1{:}}

\DeclareMathOperator{\Ric}{Ric}

\DeclareMathOperator{\cut}{Cut}
\DeclareMathOperator{\R}{\mathbb{R}}
\DeclareMathOperator{\Regplus}{Reg^+_{\eta}}

\date{\today}

\begin{document}

\title{Hawking-type singularity theorems for worldvolume energy inequalities}

\author{Melanie Graf}
\email{graf@math.uni-tuebingen.de}
\affiliation{University of Tübingen, Faculty of Mathematics, Auf der Morgenstelle 10, 72076 Tübingen, Germany}

\author{Eleni-Alexandra Kontou}
\email{eleni.kontou@kcl.ac.uk}
\affiliation{ITFA and GRAPPA, Universiteit van Amsterdam, Science Park 904, Amsterdam, the Netherlands}
\affiliation{Department of Mathematics, King’s College London, Strand, London WC2R 2LS, United Kingdom}

\author{Argam Ohanyan}
\email{argam.ohanyan@univie.ac.at}
\affiliation{Department of Mathematics, University of Vienna, Oskar-Morgenstern Platz 1, 1090 Vienna, Austria}

\author{Yasmin Schinnerl}
\email{b.schinnerl@gmail.com}
\affiliation{Department of Mathematics, University of Vienna, Oskar-Morgenstern Platz 1, 1090 Vienna, Austria}

\begin{abstract}

The classical singularity theorems of R.~Penrose and S.~Hawking from the 1960s show that, given a pointwise energy condition (and some causality as well as initial assumptions), spacetimes cannot be geodesically complete. 
Despite their great success, the theorems leave room for physically relevant improvements, especially regarding the classical energy conditions as essentially any quantum field theory necessarily violates them. While singularity theorems with weakened energy conditions exist for worldline integral bounds, so called worldvolume bounds are in some cases more applicable than the worldline ones, such as the case of some massive free fields. In this paper we study integral Ricci curvature bounds based on worldvolume quantum strong energy inequalities. Under the additional assumption of a - potentially very negative - global timelike Ricci curvature bound, a Hawking type singularity theorem is proven. 
Finally, we apply the theorem to a cosmological scenario proving past geodesic incompleteness in cases where the worldline theorem was inconclusive.

\vskip 1em

\noindent
\emph{Keywords:} Hawking's singularity theorem, worldvolume energy conditions, quantum energy inequality

\medskip
\noindent
\emph{MSC2020:} 83C75, 
        53C50, 
       53B30, 
       70S20 
\end{abstract}

\maketitle

\tableofcontents

\newpage

\section{Introduction}
\label{sec:introduction}

The singularity theorems of Penrose \cite{Penrose:1964wq} and Hawking \cite{Hawking:1966sx} represent an important breakthrough in classical general relativity. They showed that gravitational systems will originate or terminate in a singularity, defined in that context as geodesic incompleteness, under general conditions. For a recent review article on singularity theorems, see \cite{senovilla2022critical}.

Both theorems, along with mild causality conditions, require the non-negativity of contractions of the Ricci tensor. These conditions are assumed to represent properties of the matter content of the spacetime since, with the use of the Einstein Equation, they can be re-written as energy conditions, i.e.\ restrictions on contractions of the stress-energy tensor. 
The Hawking singularity theorem uses the timelike convergence condition
\be
R_{\mu \nu}U^\mu U^\nu \geq 0 \,,
\ee
where $U^\mu$ is a timelike vector. With the use of the Einstein Equation it becomes the strong energy condition (SEC)
\be
\label{eqn:eed}
\rho_U \equiv \left(T_{\mu \nu} -\frac{T g_{\mu \nu}}{n-2} \right) U^\mu U^\nu \geq 0 \,,
\ee
where $n$ is the spacetime dimension and $T$ the trace of the stress-energy tensor. The quantity on the left is the effective energy density (EED), a term introduced in \cite{Brown:2018hym}. The SEC has an unclear physical interpretation \cite{Kontou:2020bta, Curiel:2014zba} and is easily violated by classical fields, for example the free massive Klein-Gordon field. The situation is even worse when including quantum effects. It has long been known \cite{Epstein:1965zza} that quantum fields can admit negative energies violating all pointwise energy conditions.
Average energy conditions and quantum energy inequalities (QEIs) are weaker restrictions on the energy density (and similar quantities) or its renormalized expectation value in the case of quantum fields. Starting from the seminal work of Ford \cite{Ford:1978qya} QEIs have been derived for flat and curved spacetimes and a variety of mainly free fields (see Refs.~\cite{Kontou:2020bta}, \cite{Fewster2017QEIs} and references within). They often take the form of 
\be
\label{eqn:qeigen}
\langle \rho(f) \rangle_\omega \geq -|||f||| \,, 
\ee
where $f$ is a smooth compactly supported real-valued function. Here $\rho$ is the renormalized energy density, $\omega$ is a Hadamard state, and $||| \cdot |||$ a Sobolev norm.
The integral is more often over a timelike geodesic but it can also be over a worldvolume. 

With the use of the classical or the semiclassical Einstein Equation average energy conditions and QEIs become conditions on the Ricci tensor and may be used as alternative assumptions for singularity theorems. Starting with the work of Tipler \cite{Tipler:1978zz} various authors have proven singularity theorems with weaker energy conditions \cite{Borde:1987qr, Roman:1988vv}. First Fewster and Galloway \cite{Fewster:2010gm} proved singularity theorems with a condition of the form of \eqref{eqn:qeigen} using the Raychaudhuri equation. More recently Fewster and Kontou \cite{Fewster:2019bjg} presented proofs with a similar energy condition using index form methods and estimated the required initial contraction or mean normal curvature of a Cauchy surface. 

All of these works use energy conditions averaged over a single timelike or null geodesic. However, there is strong motivation to prove singularity theorems using worldvolume bounds instead.
In the timelike case the worldvolume bounds are often stronger than the worldline ones for both classical \cite{Brown:2018hym} and quantum fields \cite{Fewster:2018pey}. In particular in the case of a massive scalar, the mass dependence can be removed in the case of worldvolume bounds.

More important motivation to consider worldvolume inequalities arises in the null case. Fewster and Roman showed with an explicit counterexample \cite{Fewster:2002ne} that QEIs over a null geodesic segment do not admit finite lower bounds. However, this is not the case if the integral is over two null directions as it was recently proven \cite{Fliss:2021phs}. Using such an inequality as an assumption to a singularity theorem would lead to the proof of a semiclassical singularity theorem for null geodesic incompleteness. Previously only two semi-classical singularity theorems have been proven; one for the timelike incompleteness case \cite{Fewster:2021mmz} and one for the null incompleteness case \cite{Freivogel:2020hiz} but the latter required the introduction of a theory dependent UV cutoff significantly weakening the result.

The consequence of the classical singularity theorems is the existence of an incomplete causal geodesic. If one could simply extend the spacetime to a complete one this would in some sense signify that the incompleteness were not physically reasonable as there would be another better suited spacetime to model the situation.
Of course, the singularity theorems also imply that one cannot accomplish this with a $C^2$-Lorentzian metric without violating either the energy or causality condition.

Therefore the question arises whether one could instead resolve this issue by extending with a lower regularity Lorentzian metric,
i.e.\ given a $C^2$-spacetime satisfying the assumptions of the singularity theorems, could there be a (timelike and/or null geodesically) complete low regularity extension of it, satisfying appropriate generalizations of the assumptions? In other words could the singularity predicted by the smooth theorems merely be due to a drop in regularity (which could be akin to a jump or delta distribution in the curvature)?

This question is essentially equivalent to whether singularity theorems continue to hold for lower regularity metrics which was already raised and extensively discussed in \cite[Sec.~8.4]{hawking1973large}. Nevertheless it has only recently become rigorously tractable mathematically (cf.~\cite{kunzinger2015hawking, kunzinger2015penrose, graf2018hawking, graf2020singularity, schinnerl2021note, kunzinger2022hawking}) and trying to resolve it fully continues to be an active pursuit of major physical importance because a plethora of physical models involve spacetime metrics which are not $C^2$ (e.g., the Oppenheimer–-Snyder model of a collapsing star, matched spacetimes, gravitational shock waves, thin mass shells, \dots). For a recent comprehensive review of this research area see \cite{steinbauer2022singularity}. 

The current proofs of low-regularity singularity theorems commonly employ an approximation based approach aiming to distill and exploit some stability properties of both the conditions in the singularity theorems and the formation or non-formation of singularities. Note however that already within the class of smooth metrics, Ricci curvature being non-negative along timelike or null geodesics is unstable: Even if a metric $g_0$ satisfies the timelike or null convergence condition, metrics $g_\varepsilon $ converging to $g_0$ will in general violate these conditions by some small error, with the exact nature of the introduced error depending on the topology of the convergence. If $g_0$ is a non-smooth metric satisfying only a distributional version of the strong or null energy condition, then the possible topologies of convergence and therefore the form of the energy condition recovered for $g_\varepsilon$ are limited by the initial regularity of $g_0$. The current state of the art regarding low-regularity singularity theorems \cite{graf2020singularity, kunzinger2022hawking} considers $g_0\in C^{1}$. 
It was shown that in this case - for a family of approximations $g_\varepsilon$ constructed in a very specific way -  convergence is strong enough, but only barely, to recover a classical energy condition for the approximating $g_\varepsilon $ up to  ``pointwise small'' violations. That is one obtains a condition of the form $\mathrm{Ric}[g_\varepsilon ](X,X) \geq -\delta_\varepsilon$ for all $g_\varepsilon$-timelike/null vectors $X$ contained in some compact subset $K\subseteq TM$, where $0<\delta_\varepsilon\equiv \delta_\varepsilon (K) \to 0$. Such an energy condition is still amenable to classical Riccati or index form techniques to show that (for small enough $\delta_\varepsilon$ and large enough $K$ depending on the given initial condition) certain timelike geodesics must encounter focal points before a fixed time which then can be leveraged to obtain that, in the limit, a singularity must exist for the original metric.
However, if one were only to obtain weaker convergence
one expects that such pointwise estimates cannot be recovered from the distributional energy conditions placed on $g_0$  and that one instead will have to make do with integral bounds of some form, such as the $L^1$-bound we study in Theorem \ref{thm: singthmbasedonsegmineq}.\\

In this work we will address the issues arising from starting with a worldvolume bound by translating these bounds into bounds along single timelike geodesics by using ideas inspired
by the so-called segment inequality from Riemannian geometry due to Cheeger--Colding
\cite{cheeger1996lower} as well as an $L^1$-Ricci bound based Myers theorem by Sprouse \cite{sprouse2000integral}. More specifically, we consider volume integral bounds of the form
\begin{align}
       \frac{1}{\sigma(B)} \int_{\Omega_T^+(B)} |\Ric(U_p,U_p)_-| \, dvol_g(p) \leq C   
\end{align}
for suitable subsets $B$ of a Cauchy surface $\Sigma$ in $M$ (here $C$ is a constant, $U_p$ is the vector field that is tangent to normal geodesics from $\Sigma$ at $p$, $\sigma$ is the intrinsic Riemannian volume measure of $\Sigma$, and $\Omega_T^+(B)$ are those points that can be reached from $B$ when evolving until time $T$), as well as
\begin{align}
    \int_M \Ric(U_p,U_p) F(p)^2 \, dvol_g(p) \geq -Q_1 \|F\|_{L^2}^2 - Q_2 \|\dot{F}\|_{L^2}^2\quad\quad \forall F\in C^\infty_c(\mathcal{M}^+),
\end{align}
where $Q_1,Q_2$ are suitable constants and $\mathcal{M}^+$ is the cut point free future evolution of $\Sigma$, see Sec.~\ref{sec: geometricbackground} for definitions and Sec.~\ref{sec: singularitytheorems} for precise statements of our results. We then reduce these inequalities to inequalities along single timelike geodesics:
 By standard results from Lorentzian geometry the normal exponential map provides a diffeomorphism from a neighbourhood of the normal bundle of $\Sigma$ to a neighbourhood of $\Sigma$.  
 In such a region the evolution of the surface is well defined and one can use its geometric properties (like the infinitesimal area element and the mean curvature). However, as soon as the normal geodesics to $\Sigma$ encounter cut points, the evolution surfaces will degenerate or form caustics. 
Hence we will have to restrict our analysis to regions where $\Sigma$ has a well defined evolution. This will in turn allow us, using exponential coordinates, to naturally split a volume integral into an iterated integral along geodesics of the congruence and over a subset of $\Sigma$. However, the volume element will not be constant in time in these coordinates so the integrand in each line integral will not only involve the Ricci curvature along the geodesic but an additional geometric factor determined by the time evolution of the area element of $\Sigma$. To be able to deal with this additional factor, we need to develop new estimates on the time evolution of the area element from {\em below} resp.~{\em backwards in time}.
From there on we are able to extract bounds on a single line integral and use classical index form methods to deduce geodesic focusing given an appropriate initial mean curvature condition on $\Sigma$.\\

This paper is organized as follows. In Sec.~\ref{sec: geometricbackground} we give a series of definitions and known results on spacetime geometry that will be needed later. In Sec.~\ref{sec: forwardbackwardcomparison} we provide area comparison results that are used to derive bounds on integrals along geodesics from worldvolume integrals. In Sec.~\ref{sec: singularitytheorems} we present the main result of this paper, a singularity theorem with a worldvolume integrated energy condition as an assumption. In Sec.~\ref{sec: strongenergyinequality} we present a physical application of our theorem for the case of the classical non-minimally coupled scalar field. In Sec.~\ref{section: application} we apply the theorem in the case of a simple cosmological model proving past timelike geodesic incompleteness for a few different fields. We conclude in Sec.~\ref{sec:discussion} with a summary and discussion of future work. 

\subsection{Notation and Conventions}
\label{subsec: notationconventions}

Let us collect some remarks on the notation and conventions that we will use throughout. First, a spacetime is a pair $(M,g)$ where $M$ is a 2nd countable, Hausdorff smooth manifold, and $g$ is a (smooth) metric on $M$ with signature $(-,+,\dots,+)$. The D'Alembertian operator with respect to the metric $g$ is defined as $\Box_g := -g^{\mu \nu} \nabla_\mu \nabla_\nu$. Note that this sign convention for $\Box_g$ is consistent with the physics literature references of Sec.~\ref{sec: strongenergyinequality}-\ref{section: application}, but differs from the one used in most of the references for Sec.~\ref{sec: geometricbackground}-\ref{sec: singularitytheorems}. The Riemann curvature tensor of $g$ is $R(X,Y)Z = \nabla_X \nabla_Y Z - \nabla_Y \nabla_X Z - \nabla_{[X,Y]}Z$ and the Ricci tensor $\Ric$ is its $(1,3)$-contraction. For $A \subseteq M$, we write $\tau_A$ for the (forward) time separation to $A$, i.e.\ $\tau_A(p):=\sup_{q \in A} \tau(q,p)$. A Cauchy surface in a spacetime is a set which is uniquely met by each inextendible causal curve. Given a spacelike hypersurface $\Sigma$ of a spacetime $(M,g)$, the second fundamental form is the map $II:\mathfrak{X}(\Sigma) \times \mathfrak{X}(\Sigma) \to \mathfrak{X}(\Sigma)^{\perp}$, $II(X,Y):=(\nabla_X Y)^{\perp}$, where $\nabla$ is the Levi-Civita connection in $M$. For the mean curvature of a spacelike hypersurface, we adopt the conventions in \cite{treude2013volume}, i.e.\ if $\vec{n}$ is the future unit normal to the hypersurface, then the mean curvature is given by $H = \mathrm{tr}(\nabla \vec{n})$.\\
We use geometric units throughout this work (i.e.\ $G=c=1$) except for Section \ref{section: application}. Let us note that we will employ index notation in Section \ref{sec: strongenergyinequality}, as is customary in the physical literature.

\section{Geometric background}
\label{sec: geometricbackground}

In this section, we collect several well-known results from spacetime geometry which will be needed for what follows. Throughout, let $(M,g)$ be a smooth spacetime of dimension $n$ and let $\Sigma \subseteq M$ be a smooth, spacelike Cauchy surface with future timelike unit normal $\vec{n}$. We will often only consider future versions in definitions and results, past versions are obtained by duality.

As we will see in this section the exponential map not only yields a diffeomorphism from a
neighbourhood of its normal bundle to a neighbourhood of our spacelike Cauchy surface
$\Sigma$, it also provides us
with a foliation of that domain by 
hypersurfaces perpendicular to the
congruence of normal vector fields emanating from $\Sigma.$
This is guaranteed as long as the geodesic congruence does not
encounter cut points, as then the exponential map no longer is a diffeomorphism. We will mostly reference \cite{treude2011ricci} for proofs of the results presented in this section, for an alternative textbook reference see \cite[Sec.~9]{beem2017global}.

Since we will have to deal with the case where $M$ is not future geodesically complete, let $[0,s^+(\x))$ denote the maximal interval of definition for the unique geodesic $\gamma$ with initial data $\gamma(0)=\x \in \Sigma$, $\dot{\gamma}(0)=\vec{n}_\x$. 

Note that the future normal bundle of $\Sigma$ is of the form $N^+\Sigma =\{t\vec{n}_\x :  (t,\x) \in [0,\infty)\times \Sigma \} \cong [0,\infty) \times \Sigma$. Let $\mathcal{I}^+:=\{(t,\x) \in [0,\infty) \times \Sigma : t\in [0,s^+(\x)) \}\subseteq [0,\infty)\times \Sigma $.

\begin{defn}[Future normal exponential map]
\label{defn: normalexp}
The \textit{future normal exponential map} of $\Sigma$ is the map $\exp_{\Sigma}^+:\mathcal{I}^+ \to M$ defined by
\begin{align}
    \exp_{\Sigma}^+(t,\x):=\exp_{\x}(t \vec{n}_{\x}).
\end{align}
\end{defn}

Let us now recall the definition of (future) cut points
and the (future) cut function. 

\begin{defn}[Future cut function and future cut points]
\label{defn: cutfunctioncutpoints}
The \textit{future cut function} $c_{\Sigma}^+:\Sigma \to (0,\infty]$ of $\Sigma$ is defined via
\begin{align}
    c_{\Sigma}^+(\x):= \sup \{t \in [0,s^+(\x)) \mid \tau_{\Sigma}(\exp_{\x}(t \vec{n}_{\x})) = t\}.
\end{align}
If $c_{\Sigma}^+(\x) < s^+(\x)$ for some $\x\in \Sigma$, we call the point $\exp_{\x}(c_{\Sigma}^+(\x) \vec{n}_{\x})$ a \textit{future cut point} of $\Sigma$. We write $\cut^+(\Sigma)$ for the set of future cut points of $\Sigma$.
\end{defn}

\begin{lem}
\label{lem: cutfunctionlsc}
The future cut function $c_{\Sigma}^+$ is lower semicontinuous  and satisfies $c_{\Sigma}^+(\x) > 0$ for all $\x \in \Sigma$. Moreover, $\cut^+(\Sigma) \subseteq M$ is a set of measure zero with respect to the volume measure.
\end{lem}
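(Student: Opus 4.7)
The plan is to handle the three assertions (positivity, lower semicontinuity, and the measure-zero statement) separately, as each relies on a different ingredient.

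For the positivity $c_{\Sigma}^+(\x) > 0$, I would invoke a standard tubular-neighbourhood argument: around any $\x \in \Sigma$ one can find a relatively compact open $U \subseteq \Sigma$ containing $\x$ and an $\varepsilon > 0$ such that $\exp^+_\Sigma$ restricts to a diffeomorphism of $[0,\varepsilon) \times U$ onto its image and, by a Gauss-lemma type argument, each normal segment $t \mapsto \exp_\x(t\vec{n}_\x)$ for $t \in [0,\varepsilon)$ is the unique length-maximizing curve from $\Sigma$ to its endpoint. Hence $\tau_\Sigma(\exp_\x(t\vec{n}_\x)) = t$ on $[0,\varepsilon)$ and $c_{\Sigma}^+(\x) \geq \varepsilon$.

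For lower semicontinuity, I would argue by contradiction. Assuming $\x_k \to \x$ with $c_{\Sigma}^+(\x_k) \to C < c_{\Sigma}^+(\x)$, pick $t_0$ with $C < t_0 < \min\{c_{\Sigma}^+(\x), s^+(\x)\}$, so $\gamma_\x(t) := \exp_\x(t\vec{n}_\x)$ is strictly length-maximizing on $[0,t_0]$, i.e.\ $\tau_\Sigma(\gamma_\x(t_0)) = t_0$. By lower semicontinuity of $s^+$, for large $k$ we have $c_{\Sigma}^+(\x_k) < t_0 < s^+(\x_k)$, so the normal geodesic $\gamma_{\x_k}$ has already passed its own cut point by parameter $t_0$. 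By the standard characterization of cut points, this produces a competitor --- either a focal point of $\gamma_{\x_k}$ at some $s_k \leq c_{\Sigma}^+(\x_k)$, or a second maximizing normal segment $\sigma_k$ from $\Sigma$ to $\gamma_{\x_k}(c_{\Sigma}^+(\x_k))$ distinct from $\gamma_{\x_k}|_{[0,c_{\Sigma}^+(\x_k)]}$. Using global hyperbolicity (which follows from $\Sigma$ Cauchy) and the limit-curve theorem, I extract a convergent subsequence whose limit yields either a focal point on $\gamma_\x|_{[0,C]}$ or a second $\Sigma$-maximizing geodesic reaching $\gamma_\x(C)$; in either case, the classical broken-geodesic / second-variation deformation produces a timelike curve from $\Sigma$ to $\gamma_\x(t_0)$ of length strictly greater than $t_0$, contradicting $\tau_\Sigma(\gamma_\x(t_0)) = t_0$.

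For the measure-zero statement, once lower semicontinuity is known the cut function is Borel measurable, and
\[
    \cut^+(\Sigma) \;=\; \exp^+_\Sigma(G), \qquad G := \{(c_{\Sigma}^+(\x),\x) : \x \in \Sigma,\; c_{\Sigma}^+(\x) < s^+(\x)\}.
\]
The set $G$ is the graph of a finite-valued Borel measurable function on a Borel subset of $\Sigma$, so by Fubini it has measure zero in $[0,\infty) \times \Sigma$ (with respect to the product of Lebesgue on $[0,\infty)$ and any smooth volume measure on $\Sigma$). Since $\exp^+_\Sigma$ is smooth and hence locally Lipschitz, it maps null sets to $g$-null sets in $M$. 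The main obstacle is the LSC step: extracting the competing limit curve $\sigma$ cleanly and then using either the focal-point information or the broken geodesic to force a strict length gain past parameter $C$ requires careful use of the limit-curve theorem in globally hyperbolic spacetimes together with second-variation / deformation technology, and this is where the Lorentzian-specific machinery collected in \cite{treude2011ricci} (alternatively \cite[Sec.~9]{beem2017global}) is invoked.
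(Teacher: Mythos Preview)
Your proposal is correct and follows the standard route; the paper's own proof simply defers to \cite{treude2011ricci} for all three claims (Cor.~3.2.23 for initial maximization and hence positivity, Prop.~3.2.29 for lower semicontinuity, and Prop.~3.2.32 for the measure-zero statement), where precisely the arguments you sketch are carried out. In effect you have unpacked the cited references rather than taken a different approach.
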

\begin{proof}
Lower semicontinuity follows from \cite[Prop.\ 3.2.29]{treude2011ricci}. Since any geodesic normal to $\Sigma$ must maximize initially (cf.\ \cite[Cor.\ 3.2.23]{treude2011ricci}), it follows that $c_{\Sigma}^+(\x) > 0$ for all $\x \in \Sigma$. The claim about the measure of $\cut^+(\Sigma)$ is proven in \cite[Prop.\ 3.2.32]{treude2011ricci}. 
\end{proof}

\begin{lem}
\label{lem: normalexpdiffeo}
The future normal exponential map $\exp_{\Sigma}^+$ is a diffeomorphism (of manifolds with boundary) from a neighborhood $\mathcal{D}^+$ of $\{0\} \times \Sigma$ onto $\mathcal{M}^+:=J^+(\Sigma) \setminus \cut^+(\Sigma)$.
\end{lem}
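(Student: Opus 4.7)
The plan is to introduce the candidate domain
\[
\mathcal{D}^+ := \{(t,\x) \in [0,\infty)\times \Sigma : t < c_\Sigma^+(\x)\}
\]
and show that $\exp_\Sigma^+|_{\mathcal{D}^+}$ is a bijective local diffeomorphism onto $\mathcal{M}^+$. That $\mathcal{D}^+$ is open in $[0,\infty)\times\Sigma$, and hence carries the structure of a manifold with boundary $\{0\}\times\Sigma$, follows at once from Lemma~\ref{lem: cutfunctionlsc}: the cut function $c_\Sigma^+$ is lower semicontinuous and strictly positive, and since $c_\Sigma^+(\x) \leq s^+(\x)$ by construction, automatically $\mathcal{D}^+ \subseteq \mathcal{I}^+$.

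Next I would establish bijectivity onto $\mathcal{M}^+$. For surjectivity, take $q \in \mathcal{M}^+ = J^+(\Sigma) \setminus \cut^+(\Sigma)$. If $q \in \Sigma$, then trivially $q = \exp_\Sigma^+(0,q)$; otherwise $q \in I^+(\Sigma)$ (since $\Sigma$ is Cauchy), so $\tau_\Sigma(q) > 0$ is realized by a unit-speed maximizing timelike geodesic from some foot point $\x \in \Sigma$. The first variation formula forces this geodesic to meet $\Sigma$ orthogonally, so it coincides with $\gamma_\x(t) := \exp_\x(t\vec{n}_\x)$, giving $q = \exp_\Sigma^+(\tau_\Sigma(q),\x)$; since $q \notin \cut^+(\Sigma)$ one has $\tau_\Sigma(q) < c_\Sigma^+(\x)$, hence $(\tau_\Sigma(q),\x) \in \mathcal{D}^+$. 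For injectivity, if $\exp_\Sigma^+(t_i,\x_i) = q$ with $(t_i,\x_i)\in\mathcal{D}^+$ for $i=1,2$, then both geodesics maximize the separation and so $t_1 = t_2 = \tau_\Sigma(q)$; distinct foot points $\x_1 \neq \x_2$ would produce two maximizing normal geodesics meeting at $q$, whose concatenation could be rounded at the corner to produce a strictly longer timelike curve from $\Sigma$ to $q$ (cf.\ \cite[Sec.~3.2]{treude2011ricci}), contradicting maximality.

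Finally, for the local diffeomorphism property I would invoke the standard identification of $\ker d(\exp_\Sigma^+)_{(t,\x)}$ with $\Sigma$-Jacobi fields along $\gamma_\x$ vanishing at parameter $t$, which shows that $d\exp_\Sigma^+$ fails to be an isomorphism precisely at focal points of $\Sigma$ along $\gamma_\x$. Since a normal geodesic to $\Sigma$ ceases to be maximizing at or before its first focal point, such focal points can only occur at parameters $\geq c_\Sigma^+(\x)$ and are therefore excluded from $\mathcal{D}^+$; so $d\exp_\Sigma^+$ is everywhere nonsingular on $\mathcal{D}^+$. Combined with the bijectivity established above, the inverse function theorem (in its manifolds-with-boundary version, using that $t=0$ corresponds to the boundary in both $\mathcal{D}^+$ and $\mathcal{M}^+$) yields the diffeomorphism. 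I expect the main obstacle to be the clean interplay between the no-focal-points-before-cut-time principle and the corner-rounding step for injectivity --- both Lorentzian counterparts of classical Riemannian facts that crucially rely on the Cauchy-surface hypothesis to guarantee existence and orthogonality of maximizing connecting geodesics, but whose precise versions are already available in \cite{treude2011ricci} and \cite{beem2017global}.
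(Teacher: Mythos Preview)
Your proposal is correct and follows the standard route: define $\mathcal{D}^+$ via the cut function, establish bijectivity onto $\mathcal{M}^+$ using existence of maximizing normal geodesics (Cauchy hypothesis) together with corner-rounding for uniqueness, and conclude local diffeomorphism from the Jacobi-field/focal-point characterization of $\ker d\exp_\Sigma^+$. The paper does not spell any of this out --- its proof consists of a single citation to \cite[Thm.\ 3.2.31]{treude2011ricci} --- so you have effectively reconstructed the argument behind that reference, and your explicit identification of $\mathcal{D}^+$ matches the paper's Remark~\ref{rem: explicitdomainofnormalexpdiffeo}.
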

\begin{proof}
This follows from \cite[Thm.\ 3.2.31]{treude2011ricci}.
\end{proof}

\begin{rem}
\label{rem: explicitdomainofnormalexpdiffeo}
As shown in the reference given above, the set $\mathcal{D}^+$ is explicitly given by
\begin{align}
    \mathcal{D}^+ = \{(t,\x) \in [0,\infty) \times \Sigma \mid \x \in \Sigma, t \in [0,c_{\Sigma}^+(\x))\}^\circ.
\end{align}
\end{rem}

\begin{lem}
\label{lem: normalexpslices} 
Consider the diffeomorphism $\exp_{\Sigma}^+: \mathcal{D}^+ \to \mathcal{M}^+$. Then for any $t\geq 0$ the sets
$(\{t\} \times \Sigma) \cap \mathcal{D}^+ \subseteq \mathcal{D}^+$ and for any $t\geq 0$, $ \mathcal{S}_t:=\{\tau_{\Sigma} = t\} \cap
\mathcal{M}^+ \subseteq \mathcal{M}^+$ are either empty or embedded hypersurfaces (in the respective supersets).
Moreover, if nonempty,  $\mathcal{S}_t$ is spacelike and the restriction 
of the exponential map $\exp_{\Sigma}^+: (\{t\} \times \Sigma) \cap \mathcal{D}^+ \to \mathcal{S}_t$ is a diffeomorphism.
\end{lem}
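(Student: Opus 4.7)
The plan is to handle the three assertions (hypersurface structure of the two sets, spacelikeness of $\mathcal{S}_t$, and the diffeomorphism property) in sequence, exploiting that $\exp_\Sigma^+$ is already a diffeomorphism by Lemma \ref{lem: normalexpdiffeo}.

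First I would dispose of the hypersurface claim for $(\{t\}\times\Sigma)\cap\mathcal{D}^+$. By Remark \ref{rem: explicitdomainofnormalexpdiffeo}, $\mathcal{D}^+$ is the interior (in $[0,\infty)\times\Sigma$) of a set defined via $c_\Sigma^+$, hence open. Intersecting with the embedded hypersurface $\{t\}\times\Sigma$ of $[0,\infty)\times\Sigma$ therefore produces an open subset of that hypersurface, which is either empty or again an embedded hypersurface. The corresponding statement for $\mathcal{S}_t$ would then follow by transporting this along the diffeomorphism $\exp_\Sigma^+$, provided we can identify $\exp_\Sigma^+((\{t\}\times\Sigma)\cap\mathcal{D}^+)$ with $\mathcal{S}_t$.

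This identification is the content of the next step: for $(t,\x)\in\mathcal{D}^+$ one has $t<c_\Sigma^+(\x)$, so by the very definition of the cut function the normal geodesic $s\mapsto\exp_\x(s\vec{n}_\x)$ realises the time separation from $\Sigma$ up to and including $s=t$, giving $\tau_\Sigma(\exp_\Sigma^+(t,\x))=t$; conversely, any point $p\in\mathcal{M}^+$ with $\tau_\Sigma(p)=t$ lies in $\mathcal{M}^+$ and is therefore of the form $\exp_\Sigma^+(t,\x)$ for some $\x$ with $t<c_\Sigma^+(\x)$, i.e.\ $(t,\x)\in\mathcal{D}^+$. (Here one should cite \cite[Prop.~3.2.29]{treude2011ricci} or the analogous statement used in the proof of Lemma \ref{lem: cutfunctionlsc}.) Restricting the diffeomorphism $\exp_\Sigma^+$ to the open submanifold $(\{t\}\times\Sigma)\cap\mathcal{D}^+$ then gives the required diffeomorphism onto $\mathcal{S}_t$, settling both the hypersurface statement for $\mathcal{S}_t$ and item (4).

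The remaining point, spacelikeness of $\mathcal{S}_t$, I would establish by the standard Jacobi field argument. For $\x\in\Sigma$ with $t<c_\Sigma^+(\x)$ write $\gamma(s):=\exp_\x(s\vec{n}_\x)$. A tangent vector to $\mathcal{S}_t$ at $\gamma(t)$ is of the form $d(\exp_\Sigma^+)_{(t,\x)}(0,v)$ with $v\in T_\x\Sigma$, and by the usual variation-of-geodesics computation this equals $J(t)$ for the Jacobi field $J$ along $\gamma$ with $J(0)=v$ and $J'(0)=-S_{\vec{n}_\x}(v)$, where $S$ denotes the Weingarten map of $\Sigma$. Since $S_{\vec{n}_\x}(v)\in T_\x\Sigma$ and $v\in T_\x\Sigma$, both initial data are orthogonal to $\dot\gamma(0)=\vec{n}_\x$; because $\langle J,\dot\gamma\rangle$ is affine along a geodesic for any Jacobi field, this forces $\langle J(s),\dot\gamma(s)\rangle\equiv 0$. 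Thus $T_{\gamma(t)}\mathcal{S}_t\subseteq\dot\gamma(t)^\perp$, which is a spacelike hyperplane since $\dot\gamma(t)$ is timelike.

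The only real obstacle I anticipate is the identification of the image of the slice with $\mathcal{S}_t$, which rests on the ``maximisation up to the cut point'' characterisation of $c_\Sigma^+$; once that is in hand (and it is essentially what Definition \ref{defn: cutfunctioncutpoints} says, supplemented by Lemma \ref{lem: cutfunctionlsc}), all other claims reduce to routine consequences of Lemma \ref{lem: normalexpdiffeo} and Jacobi field bookkeeping.
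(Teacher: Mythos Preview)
Your argument is correct, and the identification $\exp_\Sigma^+((\{t\}\times\Sigma)\cap\mathcal{D}^+)=\mathcal{S}_t$ is exactly what the paper does as well. The genuine difference lies in how you obtain the hypersurface structure and spacelikeness of $\mathcal{S}_t$: the paper invokes \cite[Prop.~3.2.33]{treude2011ricci} to the effect that $\tau_\Sigma$ is smooth on $\mathcal{M}^+$ with timelike gradient, so that $\mathcal{S}_t=\tau_\Sigma^{-1}(t)\cap\mathcal{M}^+$ is immediately an embedded spacelike hypersurface by the regular value theorem. You instead transport the hypersurface structure from the domain side via the diffeomorphism $\exp_\Sigma^+$ and then run a separate Jacobi field argument for spacelikeness. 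Your route is more self-contained (it does not rely on the cited regularity result for $\tau_\Sigma$) and makes the orthogonality $T_p\mathcal{S}_t\perp U_p$ explicit, which is in any case needed later; the paper's route is shorter and packages both properties into a single appeal to the gradient of $\tau_\Sigma$.
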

\begin{proof}
Since $\tau$ is smooth on $\mathcal{M}^+$ with timelike gradient by \cite[Prop.\ 3.2.33]{treude2011ricci}, $\mathcal{S}_t$ is an embedded, spacelike hypersurface. 

Since $\exp_{\Sigma}^+: \mathcal{D}^+ \to \mathcal{M}^+$ is a diffeomorphism, for any $q\in \mathcal{S}_t \subseteq \mathcal{M}^+$ there exists a point $(t',\x)\in \mathcal{D}^+$ with $\exp_{\Sigma}^+(t',\x)=q$, so by Rem.~\ref{rem: explicitdomainofnormalexpdiffeo}, $t'<c_\Sigma^+(\x)$ so $s\mapsto \exp_{\Sigma}^+(s,\x)$ is maximizing until parameter $t'$ and hence $t'=\tau_{\Sigma}(q)=t$. So $\mathcal{S}_t\subseteq \exp_\Sigma^+((\{t\} \times \Sigma)\cap \mathcal{D}^+)$. On the other hand, any $(t,\x)\in \mathcal{D}^+$ satisfies $t<c_\Sigma^+(\x)$ by Rem.~\ref{rem: explicitdomainofnormalexpdiffeo} and hence $\tau_{\Sigma}(\exp_{\Sigma}^+(t,\x)) =t$. So the restriction of $\exp_\Sigma^+$ to  $(\{t\} \times \Sigma)\cap \mathcal{D}^+$ is bijective. It is a diffeomorphism because $\exp_\Sigma^+:\mathcal{D}^+ \to \mathcal{M}^+$ is one.
\end{proof}

\begin{lem}
\label{lem: normalexpmeasure}
Consider the diffeomorphism $\exp_{\Sigma}^+:\mathcal{D}^+ \to \mathcal{M}^+$. Then the pullback measure on $\mathcal{D}^+$ of the volume measure $vol_g$ on $\mathcal{M}^+$ under this diffeomorphism (i.e. the standard pushforward measure under its inverse) is of the form
\begin{align}
    \mathcal{A} \, dt \otimes d\sigma,
\end{align}
where $\sigma$ is the volume measure on $\Sigma$ with respect to the Riemannian metric induced by $g$. The infinitesimal area $\mathcal{A} = \mathcal{A}(t,\x):\mathcal{D}^+\to (0,\infty)$ is smooth in $\x$ and 
\begin{align}
\int_{\mathrm{pr}_\Sigma((\exp_\Sigma^+)^{-1}(\mathcal{S}_t))} \mathcal{A}(t,\x) d\sigma(\x) = \sigma_t(\mathcal{S}_t),
\end{align}
where $\sigma_t$ is the Riemannian measure on $\mathcal{S}_t$. In particular, $\mathcal{A}(0,\x) = 1$ for all $\x \in \Sigma$.
\end{lem}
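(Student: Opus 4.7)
The plan is to set up adapted coordinates via the normal exponential map and read off the Jacobian factor $\mathcal{A}$ from the determinant of the induced metric on the slices $\mathcal{S}_t$. Fix $\x_0 \in \Sigma$ with $(t_0,\x_0) \in \mathcal{D}^+$ and pick local coordinates $x^1,\dots,x^{n-1}$ on $\Sigma$ near $\x_0$; together with the $t$-factor these provide coordinates $(t,x^1,\dots,x^{n-1})$ on an open subset of $\mathcal{D}^+$. Along the geodesic $\gamma_\x(t) := \exp_\x(t\vec{n}_\x)$ I would consider the pushforward frame $E_0(t,\x) := d\exp_\Sigma^+|_{(t,\x)}(\partial_t) = \dot{\gamma}_\x(t)$ and $E_i(t,\x) := d\exp_\Sigma^+|_{(t,\x)}(\partial_{x^i})$ for $i = 1,\dots,n-1$. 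Because $\exp_\Sigma^+$ is a diffeomorphism, $\{E_0,E_1,\dots,E_{n-1}\}$ is a frame at every point of $\mathcal{M}^+$.

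The key step is a Gauss-lemma type argument showing that the pulled-back metric is block diagonal, namely $(\exp_\Sigma^+)^*g = \mathrm{diag}(-1,\, h_{ij}(t,\x))$ with $h_{ij}(t,\x) := g(E_i,E_j)|_{\gamma_\x(t)}$. Since $\gamma_\x$ is a unit-speed timelike geodesic, $g(E_0,E_0) \equiv -1$. For the off-diagonal entries, the $E_i$ are variation vector fields of the geodesic variations $(t,s)\mapsto\exp_{\alpha(s)}(t\vec{n}_{\alpha(s)})$ for curves $\alpha \subset \Sigma$ with $\dot\alpha(0) = \partial_{x^i}|_\x$, hence they commute with $E_0$ in the sense $\nabla_{E_0}E_i = \nabla_{E_i}E_0$. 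Combined with the geodesic equation this gives $\partial_t g(E_0,E_i) = g(E_0,\nabla_{E_0}E_i) = g(E_0,\nabla_{E_i}E_0) = \tfrac{1}{2}\partial_{x^i} g(E_0,E_0) = 0$, so $g(E_0,E_i)$ is $t$-constant. At $t = 0$, $E_0 = \vec{n}_\x$ is orthogonal to $T_\x\Sigma \ni E_i(0,\x) = \partial_{x^i}|_\x$, so $g(E_0,E_i) \equiv 0$. By Lemma \ref{lem: normalexpslices}, $\exp_\Sigma^+(t,\cdot)$ restricts to a diffeomorphism onto the spacelike $\mathcal{S}_t$, so $h_{ij}(t,\x)$ is positive definite and coincides with the pullback of the induced Riemannian metric on $\mathcal{S}_t$.

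From the block form, the pullback volume element is $(\exp_\Sigma^+)^* dvol_g = \sqrt{\det h(t,\x)}\, dt\, dx^1\cdots dx^{n-1}$, while in the same coordinates the Riemannian measure on $\Sigma$ reads $d\sigma = \sqrt{\det h(0,\x)}\, dx^1 \cdots dx^{n-1}$. Setting
\begin{align}
    \mathcal{A}(t,\x) := \sqrt{\frac{\det h(t,\x)}{\det h(0,\x)}}
\end{align}
yields $(\exp_\Sigma^+)^* dvol_g = \mathcal{A}\, dt \otimes d\sigma$. This $\mathcal{A}$ is manifestly coordinate-independent (both numerator and denominator transform by the same Jacobian factor under changes of $\x$-coordinates), smooth in $\x$ by smoothness of $\exp_\Sigma^+$ and $g$, strictly positive by positive-definiteness of $h(t,\x)$, and equal to $1$ at $t = 0$. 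The integral formula follows from the same computation applied slicewise: by Lemma \ref{lem: normalexpslices} the map $\exp_\Sigma^+(t,\cdot):\mathrm{pr}_\Sigma((\exp_\Sigma^+)^{-1}(\mathcal{S}_t)) \to \mathcal{S}_t$ is a diffeomorphism and $\sigma_t$ pulls back to $\sqrt{\det h(t,\x)}\, dx^1\cdots dx^{n-1} = \mathcal{A}(t,\x)\, d\sigma(\x)$, so integrating yields $\sigma_t(\mathcal{S}_t)$.

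I do not expect any genuine obstacle; the only substantive ingredient is the Gauss-lemma orthogonality of the normal geodesics to the slices, and everything else is transport of structure under the diffeomorphism $\exp_\Sigma^+$. The one piece of care needed is checking that the construction is independent of the local coordinates chosen on $\Sigma$, which is automatic from the geometric description of $\mathcal{A}$ as the ratio of slicewise volume densities.
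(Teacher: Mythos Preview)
Your proof is correct but proceeds along a different route from the paper. You work directly in local coordinates, establish the Gauss-lemma splitting $(\exp_\Sigma^+)^*g = -dt^2 + h(t,\x)$, and read off $\mathcal{A}(t,\x) = \sqrt{\det h(t,\x)/\det h(0,\x)}$ as the Jacobian factor in the volume form. The paper instead invokes the coarea formula (citing \cite[Prop.~2.10]{treude2013volume}) to write $\int_{\mathcal{M}^+} f\,dvol_g = \int_0^\infty \int_{\mathcal{S}_t} f\,d\sigma_t\,dt$, then argues abstractly that the pushforward measure $(\exp_\Sigma^+)^{-1}_*\sigma_t$ is absolutely continuous with respect to $\sigma$ and \emph{defines} $\mathcal{A}(t,\cdot)$ as the resulting density; the explicit determinant formula is only recorded afterwards as a remark. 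Your approach is more elementary and self-contained---it does not need the coarea formula as a black box and yields the explicit expression for $\mathcal{A}$ and its smoothness in both variables at once. The paper's approach has the minor advantage of making the slicewise integral identity $\int \mathcal{A}(t,\x)\,d\sigma(\x) = \sigma_t(\mathcal{S}_t)$ appear as a direct consequence of the definition rather than a separate computation, and it fits naturally into the comparison-geometry framework of \cite{treude2013volume} used elsewhere in the paper.
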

\begin{proof}
This is basically the ``measure version'' of the coarea formula, cf.\ \cite[Prop.\ 3]{treude2013volume}. The coarea formula implies that for any $f \in L^1(J^+(\Sigma))$,
\begin{align}
    \int_{J^+(\Sigma)} f \, dvol_g = \int_{\mathcal{M}^+} f \, dvol_g = \int_0^{\infty} \int_{\mathcal{S}_t} f|_{\mathcal{S}_t} d\sigma_t \, dt.
\end{align}
By definition of the pushforward of a measure, we get that
\begin{align}
    \int_{\mathcal{S}_t} f|_{\mathcal{S}_t} \, d\sigma_t &= \int_{\mathcal{S}_t} (f \circ \exp_{\Sigma}^+) \circ (\exp_{\Sigma}^+)^{-1} \, d\sigma_t \nonumber\\
    &= \int_{(\{t\} \times \Sigma) \cap \mathcal{D}^+} (f \circ \exp_{\Sigma}^+) \, d((\exp_{\Sigma}^+)^{-1}_* \sigma_t).
\end{align}
Since the volume form is locally an $n$-fold wedge, the pushforward of $\sigma_t$ under $(\exp_{\Sigma}^+)^{-1}$ is proportional to the Riemannian measure on $(\{t\} \times \Sigma) \cap \mathcal{D}^+$ (which by identification is just $\sigma$), so there is a smooth (in $\x \in \Sigma$) function $\mathcal{A}(t,\x)$ such that $d((\exp^{+})^{-1}_* \sigma_t) = \mathcal{A}(t,.) \,d\sigma$. Reinserting this into the previous calculation, and then that back into the coarea formula, we get
\begin{align}
    \int_{\mathcal{M}^+} f \, dvol_g = \int_0^{\infty} \int_{(\{t\} \times \Sigma) \cap \mathcal{D}^+} (f \circ \exp_{\Sigma}^+) \mathcal{A}\, d\sigma dt = \int_{\mathcal{D}^+} (f \circ \exp_{\Sigma}^+) \mathcal{A} \, dt\otimes d\sigma.
\end{align}
Since this holds for all integrable $f$, the claim about the measures follows.
\end{proof}

Let us recapitulate: For any $f \in L^1(J^+(\Sigma))$, we have
\begin{equation}
\label{eq: integralinMvsintegralinproduct}
    \int_{\mathcal{M}^+} f \, dvol_g = \int_{\mathcal{D}^+} (f \circ \exp_{\Sigma}^+) \, \mathcal{A} \, dt \otimes d\sigma.
\end{equation}
The same relation is true if we integrate over any measurable subset $C \subseteq \mathcal{D}^+$ and $\exp_{\Sigma}^+(C) \subseteq \mathcal{M}^+$.

\begin{rem}
\label{rem: A=sqrt(det g_t)/sqrt( det g_0)}
If we write $g(t,\x) = g(\exp_{\Sigma}^+(t,\x))$, then, upon choosing a chart in $\Sigma$, it follows that $\mathcal{A}(t,\x) = \frac{\sqrt{\det g(t,\x)}}{\sqrt{\det g(0,\x)}}$, in particular $\mathcal{A}$ is smooth in both variables. This can be seen from the splitting of the metric in $\exp_{\Sigma}^+$-coordinates, i.e.\
\begin{align}
    (\exp_{\Sigma}^+)^* g = -dt^2 + g_t,
\end{align}
where $g_t$ is the Riemannian metric induced by $g$ on $\mathcal{S}_t$. Later, we will be interested in proving comparison results for the infinitesimal area $\mathcal{A}(t,\x)$, which then directly imply more or less identical results for $\sqrt{\det g(t,\x)}$.
\end{rem}

We are interested in regions of $\Sigma$ such that the normal geodesics emanating from there are free of cut points up to a certain length. Working with such regions, rather than $\Sigma$, will allow us to use the normal exponential map in our integral calculations.

\begin{defn}[Future evolutions, regular points]
\label{defn: evolutionsregularpoints}
\begin{enumerate}
    \item[]
    \item For $T > 0$ and $B \subseteq \Sigma$, define the \textit{future evolution} of $B$ up to $T$ via
    \begin{align}
    \Omega_T^+(B):=\{\exp_{\x}(t\vec{n}_{\x}): \x \in B, t \in [0,T]\cap [0,c_\Sigma^+(\x))\}.
    \end{align}
    \item For $T,\eta > 0$, the \textit{set of $T+\eta$-regular points} $\Regplus(T) \subseteq \Sigma$ is defined to be the set of all $\x \in \Sigma$ such that $s^+(\x)>T+\eta$ and $\exp_{\Sigma}^+(.,\x):[0,T+\eta] \to M$ is $\Sigma$-maximizing.
\end{enumerate}
\end{defn}

Usually, $\eta$ will be much smaller than $T$ and is there to guarantee a lack of cut points for a short time beyond $T$.

\begin{rem}[On $\Omega_T^+$, $\Regplus(T)$]
\label{rem: onevolutionsregularpoints}
\begin{enumerate}
\item[]
\item By definition, $\Regplus(T) = (c_{\Sigma}^+)^{-1}([T+\eta,\infty])$
so $\Regplus(T)$ is measurable by the lower semicontinuity of $c_{\Sigma}^+$.
\item It is easy to see that for any $B \subseteq \Regplus(T)$,
\begin{align}
    \mathcal{M}^+ \supseteq \Omega_T^+(B)  = \exp_{\Sigma}^+ ([0, T] \times B \}).
\end{align} 
\end{enumerate}
\end{rem}

\begin{defn}[Congruence of $\Sigma$]
\label{defn: congruencevf}
Consider the diffeomorphism $\exp_{\Sigma}^+:\mathcal{D}^+ \to \mathcal{M}^+$. Let $Y \in \mathfrak{X}(\R \times \Sigma)$ denote the canonical vector field $Y:(t,\x) \mapsto (t,0) \in \R \times T_\x\Sigma$. Now consider the restriction of $Y$ as a vector field on $\mathfrak{\mathcal{D}^+}$ and let $U \in \mathfrak{X}(\mathcal{M}^+)$ denote its pushforward under the diffeomorphism $\exp_{\Sigma}^+$. We call $U$ the \textit{future unit congruence (vector field)} of $\Sigma$. Explicitly, if $p = \exp_{\Sigma}^+(t,\x) \in \mathcal{M}^+$, then
\begin{align}
    U(p) = \frac{d}{ds}\Big|_{s=0} \exp_{\x}((t+s)\vec{n}_{\x}),
\end{align}
i.e.\ $U$ assigns to $p \in \mathcal{M}^+$ the tangent vector (at $p$) of the unique normal unit speed geodesic from $\Sigma$ that passes through $p$.
\end{defn}

\section{Forward and backward comparison results}
\label{sec: forwardbackwardcomparison}

In the following, we discuss comparison results for the infinitesimal area $\mathcal{A}(t,\x)$ as well as the d'Alembert operator of the (signed) time separation.
As they will be used in various estimates, we give a detailed treatment of all the comparison constants.

In addition to the assumptions from the previous section, we assume that $(M,g)$ satisfies $\Ric_g(X,X) \geq n\kappa$ for all unit timelike vectors $X$ and some 
$\kappa<0$. Further the mean curvature $H$ of $\Sigma$ with respect to the future normal $\vec{n}$ satisfies $H \leq \beta < 0$.
\medskip

We start by proving an infinitesimal version of an area comparison result given in  \cite[Thm.\ 8]{treude2013volume}, which allows for a direct estimation of
the time evolution of the area element by the modified distance functions of constant curvature spaces.

\begin{rem}[Comparison geometries] \label{rem:compgeoms}
For completeness we briefly summarize the model warped product geometries $(a,b)\times_f (\Sigma,h)$ with $(\Sigma,h)$ a complete Riemannian manifold and Lorentzian metric $g= -dt^2+f^2 h $ from \cite[Sec.\ 4.2]{treude2013volume}. We want the model geometries to have constant Ricci curvature $\Ric_{g} = -n \kappa g$ . We further require the mean curvature of the $\{t=0\}$-slice in the model geometry to equal $\beta$, hence we have a two-parameter family of model geometries and will henceforth index all quantities in the model geometries with $\kappa$ and $\beta$. As was argued in \cite{treude2013volume}, these requirements force the Riemannian factor to be Einstein and (up to rescaling) uniquely determine $f_{\kappa,\beta}$. 
Tables containing a complete list of all possible comparison warping functions $f_{\kappa,\beta}$ can be found e.g. in \cite[Tab.\ 1]{treude2013volume} or \cite[Tab.\ 1]{graf2016volume}.
Congruent with our assumptions we will only elaborate on the case $\kappa<0$ and $\beta<0$ here, distinguishing three different cases: 
\begin{enumerate}
    \item $\beta > -(n-1) \sqrt{|\kappa|}$, where we can take  $(\Sigma_{\kappa,\beta},h_{\kappa,\beta})$ to be the $(n-1)$-dimensional unit sphere,  in which case $a_{\kappa,\beta}=-\infty$ and $b_{\kappa,\beta}=\infty $,
\begin{equation}\label{eq:comparisonfbetanear0}
        f_{\kappa,\beta}(t) = \frac{1}{\sqrt{|\kappa|}} \cosh(\sqrt{|\kappa|}t + \text{artanh}\left(\frac{\beta}{(n-1)\sqrt{|\kappa|}}\right))
\end{equation}
In this case the mean curvature of the hypersurface $\{t\}\times \Sigma_{\kappa,\beta}$, $H_{\kappa,\beta}(t)$, is given by
    \begin{equation}\label{eq:comparisonHbetanear0}
        H_{\kappa,\beta}(t)= (n-1) \frac{f_{\kappa,\beta}'(t)}{f_{\kappa,\beta}(t)}=(n-1) \sqrt{|\kappa|}\tanh(\sqrt{|\kappa|}t + \text{artanh}\left(\frac{\beta}{(n-1)\sqrt{|\kappa|}}\right)) 
\end{equation}
    \item $\beta = -(n-1) \sqrt{|\kappa|}$, where we can take  $(\Sigma_{\kappa,\beta},h_{\kappa,\beta})$ to be  flat $\R^{n-1}$, in which case $a_{\kappa,\beta}=-\infty$ and $b_{\kappa,\beta}=\infty$ and
    \begin{equation}
         f_{\kappa,\beta}(t) = \exp(-\sqrt{|\kappa|}t)
    \end{equation}
    \item $\beta < -(n-1) \sqrt{|\kappa|}$, where we can take  $(\Sigma_{\kappa,\beta},h_{\kappa,\beta})$ to be  unit $(n-1)$-dimensional hyperbolic space, in which case $a_{\kappa,\beta}=-\infty$ and $b_{\kappa,\beta}= - \frac{1}{\sqrt{|\kappa|}} \text{arcoth}\left(\frac{\beta}{(n-1)\sqrt{|\kappa|}}\right)$ (i.e., the model geometries are future timelike goedesically incomplete), 
    \begin{equation}\label{eq:comparisonfbetaverynegative}
              f_{\kappa,\beta}(t) = \frac{1}{\sqrt{|\kappa|}} \sinh(\sqrt{|\kappa|}t + \text{arcoth}\left(\frac{\beta}{(n-1)\sqrt{|\kappa|}}\right)).
    \end{equation}
  In this case the mean curvature of the hypersurface $\{t\}\times \Sigma_{\kappa,\beta}$ is given by
   \begin{equation}\label{eq:comparisonHbetaverynegative}
        H_{\kappa,\beta}(t)= (n-1) \frac{f_{\kappa,\beta}'(t)}{f_{\kappa,\beta}(t)}=(n-1) \sqrt{|\kappa|}\coth(\sqrt{|\kappa|}t + \text{arcoth}\left(\frac{\beta}{(n-1)\sqrt{|\kappa|}}\right)) 
\end{equation}
\end{enumerate}
\end{rem}

Our first result shows that given a lower Ricci bound and a bound on the mean curvature of the Cauchy surface $\Sigma$, the area element of the evolving surface can be given an upper bound corresponding essentially to the area element in the respective model geometry. 

\begin{lem}[Forward infinitesimal area comparison]
\label{lem:inf_area_comp_forward}
For $(t,\x) \in \mathcal{D}^+$ the infinitesimal area $\mathcal{A}(t,x)$ satisfies
	\begin{align}
		\mathcal{A}(t,\x) \leq \left( \frac{f_{\kappa,H(\x)}(t)}{f_{\kappa,H(\x)}(0)}\right)^{n-1} \cdot \mathcal{A}(0,x)\leq \left( \frac{f_{\kappa,\beta}(t)}{f_{\kappa,\beta}(0)}\right)^{n-1},
	\end{align}
where $H(\x)$ denotes the forward mean curvature of $\Sigma $ at $\x$ (i.e.\ the mean curvature with respect to the future normal $\vec{n}_{\x}$) and $f_{\kappa,\beta}$ is the comparison warping function of Remark \ref{rem:compgeoms}.
\end{lem}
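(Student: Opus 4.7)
The plan is to reduce the area estimate to a pointwise Riccati comparison along each normal geodesic emanating from $\Sigma$. Fix $\x \in \Sigma$ and $t \in [0,c_\Sigma^+(\x))$ and let $\gamma_\x(s) := \exp_\x(s\vec{n}_\x)$. From Remark \ref{rem: A=sqrt(det g_t)/sqrt( det g_0)}, $\mathcal{A}(\cdot,\x)$ is smooth and positive on $[0,c_\Sigma^+(\x))$, with $\mathcal{A}(0,\x) = 1$. I would set $\theta(s) := \partial_s \log \mathcal{A}(s,\x)$, which has the geometric interpretation of the expansion (mean curvature) of the slice $\mathcal{S}_s$ at $\gamma_\x(s)$ with respect to $U$. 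At $s = 0$ one finds $\theta(0) = H(\x)$, consistent with the mean-curvature sign convention fixed in Sec.~\ref{subsec: notationconventions}.

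Since the congruence $U$ is hypersurface-orthogonal (zero vorticity), the Raychaudhuri equation along $\gamma_\x$ reads
\begin{equation*}
\theta'(s) \,=\, -\Ric(\dot\gamma_\x,\dot\gamma_\x) \,-\, |\sigma(s)|^2 \,-\, \frac{\theta(s)^2}{n-1}.
\end{equation*}
Dropping the nonnegative shear term and invoking the timelike Ricci bound $\Ric \geq n\kappa$ yields the Riccati differential inequality
\begin{equation*}
\theta'(s) \,\leq\, -\,n\kappa \,-\, \frac{\theta(s)^2}{n-1}.
\end{equation*}
The comparison warping function $f_{\kappa,H(\x)}$ from \cite[Tab.\ 1]{graf2020splitting} is tailor-made so that $\psi(s) := (n-1)\, f_{\kappa,H(\x)}'(s)/f_{\kappa,H(\x)}(s)$ satisfies the corresponding equality $\psi'(s) = -n\kappa - \psi(s)^2/(n-1)$ with $\psi(0) = H(\x)$.

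Setting $u := \theta - \psi$, we have $u(0) = 0$ and $u'(s) \leq -\frac{\theta(s) + \psi(s)}{n-1}\, u(s)$, a linear Gronwall-type inequality. Multiplying by the positive integrating factor $\exp\!\int_0^s \frac{\theta+\psi}{n-1}\, d\tau$ shows that $u(s) \leq 0$ throughout $[0,t]$, i.e.\ $\theta(s) \leq \psi(s)$. Integrating from $0$ to $t$ and using $\mathcal{A}(0,\x) = 1$ yields
\begin{equation*}
\log \mathcal{A}(t,\x) \,\leq\, (n-1)\log\frac{f_{\kappa,H(\x)}(t)}{f_{\kappa,H(\x)}(0)},
\end{equation*}
which is the first claimed inequality after exponentiation. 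The second inequality will then follow from the standard monotonicity in $\beta'$ of the ratio $(f_{\kappa,\beta'}(t)/f_{\kappa,\beta'}(0))^{n-1}$ combined with the hypothesis $H(\x) \leq \beta < 0$.

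The main obstacles are essentially bookkeeping rather than conceptual: one must carefully align the paper's sign conventions (for $H$, the expansion $\theta$, and the warping function) with those in \cite{graf2020splitting, treude2013volume}, and one must ensure that the Riccati comparison remains valid throughout $[0,t]$. The latter is guaranteed because, for $(t,\x) \in \mathcal{D}^+$, the function $\mathcal{A}(\cdot,\x)$ is smooth and strictly positive on $[0,t]$, so $\theta$ is smooth there; and $f_{\kappa,H(\x)}$ remains positive on the same interval (a zero of the model warping function would produce a focal point in the constant-curvature comparison space prior to $t$, which by a localized version of the same comparison would already force $(t,\x) \notin \mathcal{D}^+$).
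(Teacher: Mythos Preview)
Your approach is correct but genuinely different from the paper's. The paper does not redo the Riccati comparison: it quotes the \emph{integrated} area comparison of \cite[Thm.~4.6, Eq.~(4.3)]{treude2013volume}, which bounds $\sigma_t(\mathcal{S}_{A,t})/\sigma(A)$ by $(f_{\kappa,\tilde\beta}(t)/f_{\kappa,\tilde\beta}(0))^{n-1}$ for any measurable $A\subset\Sigma$ with $H\leq\tilde\beta$ on $A$, and then localizes by taking $A=U_i$ a sequence of $\Sigma$-balls collapsing to $\{\x\}$, identifying the limit of the ratio with $\mathcal{A}(t,\x)$ via Lemma~\ref{lem: normalexpmeasure} and letting $\tilde\beta_i=\max_{\overline U_i}H\to H(\x)$. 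Your Raychaudhuri-plus-Gronwall argument along the single geodesic $\gamma_\x$ is essentially the proof \emph{underlying} the cited integrated result; it is more self-contained and makes the mechanism transparent, while the paper's version is shorter because the analytic work is outsourced to the reference. One convention check you should still carry out: from the explicit formula in Remark~\ref{rem: comparisonconstants} the comparison function obeys $f''+\kappa f=0$, so $\psi=(n-1)f'/f$ satisfies $\psi'=-(n-1)\kappa-\psi^2/(n-1)$ rather than $-n\kappa-\psi^2/(n-1)$ as you wrote; verify that the $\kappa$-normalizations of the Ricci hypothesis and of $f_{\kappa,\beta}$ in \cite{treude2013volume,graf2020splitting} actually agree, or else the Gronwall step picks up a stray $|\kappa|$ term and does not close as written.
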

	\begin{proof}
The second inequality is obvious since $H\leq \beta$. In the following, denote by $\mathcal{S}_{A,t}$ the set $\mathcal{S}_t \cap \exp_{\Sigma}^+((\{t\} \times A)\cap \mathcal{D}^+)$
for $A \subseteq \Sigma$. 
By \cite[Thm.\ 8, Eq.\ (15)]{treude2013volume} for any measurable $A \subseteq \Sigma$,
\begin{align}
    \frac{\sigma_t(\mathcal{S}_{A,t})}{\sigma(A)} \leq \left(\frac{f_{\kappa,\tilde{\beta}}(t)}{f_{\kappa,\tilde{\beta}}(0)}\right)^{n-1},
\end{align}
where $\tilde{\beta}$ is any upper bound of $H$ on $A$ and $\sigma_t$ is the Riemannian measure on $\mathcal{S}_t$. The proof of Lemma \ref{lem: normalexpmeasure} shows that if $\{t\} \times A \subseteq \mathcal{D}^+$, then
\begin{align}
    \sigma_t(\mathcal{S}_{A,t}) = \int_A \mathcal{A}(t,\x) \, d\sigma(\x).
\end{align}
Now suppose $(t,\x) \in \mathcal{D}^+$. Choose a sequence of open, relatively compact $\Sigma$-balls $U_i$ around $\x$ that collapse to $\{ \x \}$, then eventually $\{t\} \times U_i \subseteq \mathcal{D}^+$ and by continuity
\begin{align}
    \sigma_t(\mathcal{S}_{U_i,t}) \overset{i \to \infty}{\longrightarrow} \mathcal{A}(t,\x),
\end{align}
which, together with the comparison result referenced above (taking $\tilde{\beta}_i = \max_{\x \in \overline{U}_i} H(\x)$), proves the claim.
	\end{proof}

\begin{rem}
\label{rem: compfunctionmonot}
\noindent
\begin{enumerate} 
        \item The proof of Lemma \ref{lem:inf_area_comp_forward} is based on a result in \cite[Thm. 8]{treude2013volume}, which
        relies on $\Sigma$ being a Cauchy surface (or at least future causally complete) in order to conclude that any point in the evolution
        $\mathcal{S}_{A,t}$ is reached by a maximizing timelike geodesic from $A$. However this is the only place where $\Sigma$ being Cauchy is necessary. If we restrict ourselves to $(t,\x)$ such that there exists an open neighbourhood $U$ of $\x$ in $\Sigma$ and an $\epsilon >0$ such that $\exp^+_{\Sigma}|_{(-\epsilon, t+\epsilon)\times U}$ is a diffeomorphism onto an open subset of $M$, then the argument goes through without assuming that $\Sigma$ is Cauchy. This will be relevant when applying the previous Lemma in proving our backwards comparison result Lemma \ref{lem: infinites_backwards_comp}, as it will be applied to a level set in the evolution of $\Sigma$, which does not have to be a  Cauchy surface in general. 
        \item Note that, if $\kappa <0$ and $\beta \geq -(n-1)\sqrt{|\kappa|}$, then $f_{\kappa,\beta}$ first monotonically decreases to some minimum before monotonically increasing (cf.\ Remark \ref{rem:compgeoms}), i.e., we get for $t \leq T$
	        \begin{align} 
	         \mathcal{A}(t,\x) \leq \max\left\{ 1, \left( \frac{f_{\kappa,\beta}(T)}{f_{\kappa,\beta}(0)}\right)^{n-1}\right\}=:C^{A_+}(n,\kappa,\beta,T).
	         \end{align}
        \item If $\kappa < 0$ and $\beta < -(n-1)\sqrt{|\kappa|}$, then the model geometry is future timelike geodesically incomplete (cf.\ Remark \ref{rem:compgeoms}) and a standard generalization of Hawking's singularity theorem shows that $(M,g)$ is future timelike geodesically incomplete (see e.g.\ \cite[Thm.\ 4.2]{graf2016volume} or \cite{Galloway:2013cea}).
        Hence, we will mostly be interested in the case $\kappa,\beta < 0$ and $\beta \geq -(n-1)\sqrt{|\kappa|}$.
    \end{enumerate}
\end{rem}

In order to apply the area comparison results and prove a segment type inequality,
we need to be able to estimate the area element of $\mathcal{S}_t$ from below. This is done in the next Lemma and uses what we call \textit{backwards comparison}. This simply means that instead of estimating 
the area element from above in terms of $\kappa$ and
the mean curvature of $\Sigma$, we use the argument backwards (time reversed) from the evolution of $\Sigma$ at time $T$. 
As we still have control over the mean curvature at that time, we are able to estimate the area. Importantly, this procedure will result in a (non-trivial) bound on the area element from {\em below} in terms of $\kappa$, $\eta$ and $T$.

For the following let $H:\mathcal{M}^+ \to \mathbb{R}$ denote the map assigning to each $p\in \mathcal{M}^+$ the future mean curvature of the smooth spacelike hypersurface $\mathcal{S}_{\tau_{\Sigma}(p)}$ at the point $p$. Clearly, $H|_{\Sigma}$ is then the mean curvature of $\Sigma$, hence there is no confusion in denoting both by the same symbol $H$.

\begin{lem}[Backward infinitesimal area comparison]
\label{lem: infinites_backwards_comp}
For all $T,\eta > 0$ there is a constant $C^{A-}=C^{A-}(n,\kappa,T,\eta)$ such that for all $t \in [0,T]$  and all $\x \in \Regplus(T)$,
\begin{align}
    \mathcal{A}(t,\x) \geq \left( \frac{f_{\kappa,-H(\exp_{\x}(t \vec{n}_{\x}))}(0)}{f_{\kappa,-H(\exp_{\x}(t \vec{n}_{\x}))}(t)}\right)^{n-1} \geq \left( \frac{f_{\kappa,\tilde{\beta}(n,\kappa, \eta)}(0)}{f_{\kappa,\tilde{\beta}(n,\kappa, \eta)}(T)}\right)^{n-1}=:C^{A-}(n,\kappa, T, \eta),
\end{align}
where $\tilde{\beta}(n, \kappa, \eta):= (n-1) \sqrt{|\kappa|} \coth(\eta \sqrt{|\kappa|}) $.
\end{lem}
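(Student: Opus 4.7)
The plan is to apply the forward area comparison (Lemma~\ref{lem:inf_area_comp_forward}) ``backwards in time'': treat $\mathcal{S}_t$ as the initial hypersurface and evolve it along its past normal back to $\Sigma$. Fix $t \in [0, T]$ and $\x \in \Regplus(T)$, and set $p := \exp_{\x}(t\vec{n}_{\x}) \in \mathcal{S}_t$. The future unit normal of $\mathcal{S}_t$ at $p$ is $U(p)$, so the past normal is $-U(p)$ and the mean curvature of $\mathcal{S}_t$ at $p$ with respect to the past normal is $-H(p)$. First I would verify, via a standard concatenation argument (any timelike curve from $\mathcal{S}_{s_1}$ to $\gamma(s_2)$ strictly longer than $s_2 - s_1$ could be combined with a $\Sigma$-to-$\mathcal{S}_{s_1}$ maximizer to contradict $\tau_{\Sigma}(\gamma(s_2)) = s_2$), that every sub-segment $\gamma|_{[s_1,s_2]}$ with $0 \le s_1 < s_2 \le T + \eta$ is $\mathcal{S}_{s_1}$-maximizing. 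In particular, the past normal geodesic from $p$ is $\mathcal{S}_t$-maximizing until it reaches $\x$ at parameter $t$, and the forward normal geodesic from $p$ is $\mathcal{S}_t$-maximizing on the full $\eta$-interval $[t, t+\eta]$.

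Next, I would define the past normal exponential map of $\mathcal{S}_t$ and its infinitesimal area $\mathcal{A}^{-}_{\mathcal{S}_t}(s, p)$ in direct analogy to Definition~\ref{defn: normalexp} and Lemma~\ref{lem: normalexpmeasure}. Since Remark~\ref{rem: compfunctionmonot}(1) establishes that Lemma~\ref{lem:inf_area_comp_forward} is valid even when the initial surface is neither Cauchy nor future causally complete, I may apply it in the time-reversed spacetime to $\mathcal{S}_t$ (whose mean curvature at $p$ with respect to the reversed-future normal is $-H(p)$) to obtain
\begin{align*}
\mathcal{A}^{-}_{\mathcal{S}_t}(t, p) \le \left( \frac{f_{\kappa, -H(p)}(t)}{f_{\kappa, -H(p)}(0)} \right)^{n-1}.
\end{align*}
On the other hand, the maps $\Phi_t : \x' \mapsto \exp_{\x'}(t\vec{n}_{\x'})$ and $\Psi_t : p' \mapsto \exp_{p'}(-t U(p'))$ are mutually inverse diffeomorphisms between suitable open neighborhoods of $\x \in \Sigma$ and $p \in \mathcal{S}_t$, with Jacobians (with respect to the induced Riemannian measures) equal to $\mathcal{A}(t, \x)$ and $\mathcal{A}^{-}_{\mathcal{S}_t}(t, p)$, respectively. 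Hence $\mathcal{A}(t, \x) \cdot \mathcal{A}^{-}_{\mathcal{S}_t}(t, p) = 1$, and inverting the displayed upper bound yields the first inequality of the lemma.

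For the second (universal) inequality, I need to bound $H(p)$ from above independently of $\x$ and $t$. Since $\gamma|_{[t, t+\eta]}$ is $\mathcal{S}_t$-maximizing, the forward normal evolution of $\mathcal{S}_t$ from $p$ has no focal point on $(0, \eta]$. By a standard Riccati/index-form mean-curvature comparison using $\Ric \ge n\kappa$, this forces $H(p) \le (n-1)\sqrt{|\kappa|} \coth(\eta \sqrt{|\kappa|}) = \tilde{\beta}(n, \kappa, \eta)$, since $\tilde{\beta}$ is precisely the mean curvature producing a focal point at parameter exactly $\eta$ in the constant-curvature-$\kappa$ model, so any strictly larger $H(p)$ would force an earlier focal point and violate the maximality. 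Combining this with the monotonicity of $\left( f_{\kappa, \beta}(0)/f_{\kappa, \beta}(t) \right)^{n-1}$ in $\beta$ and $t$ (cf.\ Remark~\ref{rem: compfunctionmonot}(2) and the conventions behind Lemma~\ref{lem:inf_area_comp_forward}), replacing $-H(p)$ by $\tilde{\beta}$ and $t$ by $T$ gives the second inequality. I expect the main obstacle to be this mean-curvature comparison step; the time-reversal and Jacobian inversion are essentially book-keeping, but extracting the sharp constant $\tilde{\beta}(n, \kappa, \eta)$ requires the careful focal-point argument exploiting the $\eta$-buffer built into $\Regplus(T)$.
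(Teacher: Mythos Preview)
Your approach matches the paper's: time-reverse and apply Lemma~\ref{lem:inf_area_comp_forward} from $\mathcal{S}_t$ for the first inequality, then exploit the $\eta$-buffer in $\Regplus(T)$ together with forward mean-curvature comparison from $\mathcal{S}_t$ for the second. The paper phrases the second step as a contradiction via the explicit blow-up of the comparison function $H_{\kappa,\beta_0}$, but this is exactly the focal-point obstruction you invoke.

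There is, however, a sign slip in your mean-curvature bound. What is needed (and what the paper establishes) is $-H(p)\le\tilde{\beta}$, i.e.\ $H(p)\ge -\tilde{\beta}$; this is precisely the inequality that allows you to replace $-H(p)$ by $\tilde{\beta}$ in $\bigl(f_{\kappa,-H(p)}(0)/f_{\kappa,-H(p)}(t)\bigr)^{n-1}$, as you yourself say at the end. In the paper's convention $H=\mathrm{tr}(\nabla\vec{n})$, it is a \emph{very negative} $H(p)$ (strong future contraction) that forces an early future focal point, not a very positive one. So your assertion ``any strictly larger $H(p)$ would force an earlier focal point'' is backwards. The correct version: if $H(p)<-\tilde{\beta}$, then standard comparison gives $H(\gamma(t+s))\le H_{\kappa,H(p)}(s)$, and since $H(p)<-(n-1)\sqrt{|\kappa|}$ this right-hand side diverges to $-\infty$ at some parameter strictly less than $\eta$, contradicting that $\gamma|_{[t,t+\eta]}$ is $\mathcal{S}_t$-maximizing. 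With that sign corrected, your argument coincides with the paper's.
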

\begin{proof}
The first inequality follows from changing the time orientation and running the argument of Lemma \ref{lem:inf_area_comp_forward} backwards starting at $\exp_{\x}(t \, \vec{n}_{\x})$.

It remains to argue that $-H(\exp_{\x}(t \vec{n}_{\x}))\leq (n-1) \sqrt{|\kappa|} \coth(\eta \sqrt{|\kappa|})$ for any
$\x \in \Regplus(T)$. Assume to the contrary that
\begin{align}
-H(\exp_{\x}(t \vec{n}_{\x}))> (n-1) \sqrt{|\kappa|} \coth(\eta \sqrt{|\kappa|}),
\end{align}
then by continuity
of $\coth$ there is $\varepsilon > 0$ with $\varepsilon < \eta$ such that this inequality holds with $\eta$ replaced by $\eta - \varepsilon$, i.e., $H(\exp_{\x}(t \vec{n}_{\x}))<- (n-1)
\sqrt{|\kappa|} \coth((\eta - \varepsilon) \sqrt{|\kappa|})$, for some $\x \in \Regplus(T)$. Then, the future timelike unit speed geodesic
starting at $\exp_{\x}(t \vec{n}_{\x})$ and orthogonally to $\mathcal{S}_t$ is given by $s\mapsto \exp_{\x}((t+s) \vec{n}_{\x}) $. Note that this geodesic maximizes
the distance to $\mathcal{S}_t$ for all $s\in [0,\eta]$. By standard (forward) mean
curvature/d'Alembertian comparison starting from $\mathcal{S}_t$ (cf. \cite[Thm.\ 7]{treude2013volume}), $H(\exp_{\x}((t+s) \vec{n}_{\x})) \leq
H_{\kappa, H(\exp_{\x}(t \vec{n}_{\x}))}(s)$\footnote{As in Remark \ref{rem:compgeoms} here $H_{\kappa,\beta}(t)$ is the mean curvature of the spacelike hypersurface $\{t\} \times \Sigma_{\kappa,\beta}$ in the comparison geometry $(a_{\kappa,\beta},b_{\kappa,\beta}) \times_{f_{\kappa,\beta}} (\Sigma_{\kappa,\beta},h_{\kappa,\beta})$.} for all $s\in [0,\eta]$ and since $H(\exp_{\x}(t \vec{n}_{\x}))<- (n-1) \sqrt{|\kappa|} \coth((\eta - \varepsilon) \sqrt{|\kappa|})=:\beta_0$, actually even 
\begin{align}
H(\exp_{\x}((t+s) \vec{n}_{\x})) \leq
H_{\kappa, \beta_0}(s).
\end{align}
Note that $\beta_0<-(n-1) \sqrt{|\kappa|}$ , so checking with Remark \ref{rem:compgeoms}, specifically equation \eqref{eq:comparisonHbetaverynegative},  
\begin{align}
H_{\kappa,\beta_0 }=(n-1)\sqrt{|\kappa|}\coth(\sqrt{|\kappa|}s+ \coth^{-1}\left(\frac{\beta_0}{(n-1)\sqrt{|\kappa|}}\right)),
\end{align}
so the previous inequality becomes
\begin{align}
H(\exp_{\x}((t+s) \vec{n}_{\x})) \leq (n-1)\sqrt{|\kappa|}\coth(\sqrt{|\kappa|}s-(\eta - \varepsilon)\sqrt{|\kappa|})  .
\end{align}
Now we see that the right hand side diverges to $-\infty $ as $s\uparrow (\eta - \varepsilon) < \eta$, but the left hand site must remain finite for all $s\in [0,\eta)$ since $t\mapsto \exp_{\x}(t \vec{n}_{\x})$ doesn't have any $\Sigma$-cut points  before $T+\eta$. 
\end{proof}

Closely connected to the evolution of the area element is the d'Alembertian of the time separation function $\tau_{\Sigma}$. Heuristically speaking it measures the expansion of the volume spanned by neighbouring geodesics of the congruence and usually comparison results for the d'Alembertian are established first and then used to show area comparison results. However, as we are basing our arguments upon already established comparison results we can treat both cases independently.
As was the case for the area, it allows for an estimation from above with 
respect to the constant curvature comparison space which we include and which was shown in \cite[Thm.\ 7]{treude2013volume}.
The lower bound again follows from a backwards
comparison argument.

\begin{lem}[Forward d'Alembert comparison]
\label{lem: forwarddalembertcomparison}
For $(t,\x) \in \mathcal{D}^+$ we have that
\begin{align}
    \Box_g \tau_{\Sigma}(\exp_{\Sigma}^+(t,\x))= H(\exp_{\Sigma}^+(t,x)) \leq H_{\kappa,\beta}(t).
\end{align}
\end{lem}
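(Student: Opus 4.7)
The plan is to split the lemma into its two distinct assertions: the equality $\Box_g \tau_\Sigma = H$ at points of $\mathcal{M}^+$, and the upper bound $H \le H_{\kappa,\beta}(t)$ along normal geodesics. The second part is essentially a direct citation; the first requires a careful sign check.

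For the equality, I would work in the exponential coordinates $(t,\x)$ on $\mathcal{D}^+$ provided by Lemma~\ref{lem: normalexpdiffeo}. In these coordinates the Gauss lemma together with the fact that the normal geodesics are unit speed yields the splitting $g = -dt^2 + h_t$, where $h_t$ is a smooth one-parameter family of Riemannian metrics on $\Sigma$, and the congruence of Definition~\ref{defn: congruencevf} takes the form $U = \partial_t$. From Lemma~\ref{lem: normalexpslices} one has $\tau_\Sigma \circ \exp_\Sigma^+(t,\x) = t$, so $d\tau_\Sigma = dt$ and hence $\nabla \tau_\Sigma = -\partial_t = -U$ (the extra minus arising from the $(-,+,\dots,+)$ signature). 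Combining this with the paper's sign convention $\Box_g = -g^{\mu\nu}\nabla_\mu\nabla_\nu$ gives
\begin{equation*}
\Box_g \tau_\Sigma \;=\; -\diverg(\nabla\tau_\Sigma) \;=\; \diverg(U).
\end{equation*}
Since $U$ is unit timelike and geodesic, differentiating $g(U,U) = -1$ shows $g(\nabla_X U, U) = 0$ for all $X$, so the trace $\tr(\nabla U)$ at a point $p \in \mathcal{S}_t$ equals the trace of $\nabla U$ restricted to $T_p\mathcal{S}_t$. As $U$ restricts on $\mathcal{S}_t$ to its future unit normal, this trace is exactly the mean curvature $H(p)$ in the convention adopted in Section~\ref{subsec: notationconventions}. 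This gives $\Box_g \tau_\Sigma(\exp_\Sigma^+(t,\x)) = H(\exp_\Sigma^+(t,\x))$.

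For the inequality, I would invoke \cite[Thm.\ 4.5]{treude2013volume} directly: under the standing assumptions $\Ric(X,X) \ge n\kappa$ for unit timelike $X$ and $H|_\Sigma \le \beta$, the classical d'Alembert comparison along the unit speed normal geodesic $\gamma(s) := \exp_\x(s\vec{n}_\x)$ yields $H(\gamma(t)) \le H_{\kappa,\beta}(t)$ for all $(t,\x)\in \mathcal{D}^+$. The mechanism is the Raychaudhuri/Riccati inequality: writing $h(s) := H(\gamma(s))$ and extracting the non-negative shear term, one obtains $h'(s) \le -\tfrac{h(s)^2}{n-1} - \Ric(\dot\gamma,\dot\gamma) \le -\tfrac{h(s)^2}{n-1} - n\kappa$, while $H_{\kappa,\beta}$ solves the model Riccati equation with equality and initial value $\beta$; standard ODE comparison with $h(0) = H(\x) \le \beta$ closes the estimate.

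The only subtle point is the sign bookkeeping in the first step, where one must reconcile the physics-convention Box operator with the Lorentzian signature used in the geometric part of the paper; the two sign flips cancel, recovering the purely geometric identity $\Box_g\tau_\Sigma = \diverg(U)$. Beyond this no real obstacle arises, since smoothness of $\tau_\Sigma$ on $\mathcal{M}^+$ is already established in Lemma~\ref{lem: normalexpslices} (via \cite[Prop.\ 3.2.33]{treude2011ricci}), and the Riccati comparison underlying the inequality is the standard tool referenced throughout Section~\ref{sec: forwardbackwardcomparison}.
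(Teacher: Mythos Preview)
Your proposal is correct. The paper does not provide an explicit proof of this lemma at all: it is stated as a known result, with the preceding paragraph attributing it to \cite[Thm.\ 4.5]{treude2013volume}. Your write-up therefore goes well beyond what the paper does, supplying a self-contained verification of the equality $\Box_g\tau_\Sigma = H$ (with the correct sign bookkeeping for the paper's conventions---this is consistent with the computation the paper later carries out in Remark~\ref{rem: dAlembertop}) and a sketch of the Riccati comparison underlying the inequality, whereas the paper simply cites the reference.
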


\begin{lem}[Backward d'Alembert comparison]
\label{lem: backwarddalembertcomparison}
For any $T,\eta > 0$, $t \in [0,T]$ and $\x \in \Regplus(T)$, we have that
\begin{align}
    \Box_g \tau_{\mathcal{S}_T}(\exp_{\Sigma}^+(t,\x)) = -H(\exp_{\Sigma}^+(t,x)) \leq (n-1)\sqrt{|\kappa|}\coth(\eta \sqrt{|\kappa|}),
\end{align}
where $\tau_{\mathcal{S}_T}$ is the time separation from the spacelike hypersurface $\mathcal{S}_T$ measured backwards.
\begin{proof}
This was shown in the proof of Lemma \ref{lem: infinites_backwards_comp}.
\end{proof}
\end{lem}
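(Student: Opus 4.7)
The claim splits naturally into two independent assertions: (i) the identity $\Box_g \tau_{\mathcal{S}_T}(\exp_\Sigma^+(t,\x)) = -H(\exp_\Sigma^+(t,\x))$, and (ii) the upper bound $-H(\exp_\Sigma^+(t,\x)) \leq (n-1)\sqrt{|\kappa|}\coth(\eta\sqrt{|\kappa|})$ at such points. Assertion (ii) is already contained in the contradiction argument given in the proof of Lemma \ref{lem: infinites_backwards_comp}: there one shows that if the reverse inequality held, a forward d'Alembert/Riccati comparison starting from $\mathcal{S}_t$ would force $H$ along the normal geodesic to blow up to $-\infty$ strictly before parameter $\eta$, contradicting $\x \in \Regplus(T)$ (which precisely guarantees absence of $\Sigma$-cut points on $[0,T+\eta]$). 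Hence only (i) requires justification here.

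For (i), the plan is to invoke the general identity underlying Lemma \ref{lem: forwarddalembertcomparison}, namely that on the open set where the signed time separation to a smooth spacelike hypersurface $\Sigma'$ is smooth, its d'Alembertian equals the mean curvature of its level sets computed with respect to the unit normal pointing in the direction of the gradient of the time separation. Applied in the time-reversed spacetime $(M,g)^{\text{rev}}$ with $\Sigma$ replaced by $\mathcal{S}_T$, the backward time separation $\tau_{\mathcal{S}_T}$ in the original spacetime becomes the forward time separation to $\mathcal{S}_T$ in $(M,g)^{\text{rev}}$, and the "future" unit normal to $\mathcal{S}_T$ in the reversed spacetime is $-U$ in the original one. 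Since $\Box_g$ is unchanged by time reversal and $H = \mathrm{tr}(\nabla \vec{n})$ flips sign when $\vec{n}$ is replaced by $-\vec{n}$, this yields $\Box_g \tau_{\mathcal{S}_T} = -H$ on the relevant set. That $\tau_{\mathcal{S}_T}$ is actually smooth at the points $\exp_\Sigma^+(t,\x)$ for $t\in[0,T]$ and $\x\in\Regplus(T)$ follows because the forward normal geodesic from $\Sigma$ is $\Sigma$-maximizing on $[0,T+\eta]$, so its sub-segment $[t,T]$ is a backward $\mathcal{S}_T$-maximizer without backward $\mathcal{S}_T$-cut points, which is the analogue of Lemma \ref{lem: normalexpdiffeo} for $\mathcal{S}_T$ measured backwards.

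The only real obstacle is bookkeeping: making sure the sign conventions for $H$ (trace of $\nabla\vec{n}$ with $\vec{n}$ future-pointing) and for $\Box_g = -g^{\mu\nu}\nabla_\mu\nabla_\nu$ are correctly propagated through the time reversal, and that the backward-maximizing property transfers cleanly from forward $\Sigma$-maximality. Once these are in place, combining (i) and (ii) immediately yields the stated inequality.
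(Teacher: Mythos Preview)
Your proposal is correct and follows essentially the same route as the paper: the paper's proof consists of a single sentence pointing back to the contradiction argument in Lemma~\ref{lem: infinites_backwards_comp} for the bound on $-H$, which is exactly your assertion~(ii). The identity~(i) is taken for granted there (it is the standard mean-curvature/d'Alembertian relation applied in the time-reversed picture); the paper later recovers it by a direct coordinate computation in Remark~\ref{rem: dAlembertop}, whereas you obtain it via time reversal of Lemma~\ref{lem: forwarddalembertcomparison}---a cosmetic difference only.
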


\begin{rem}[On the comparison constants]
\label{rem: comparisonconstants}
As already stated in the third point of Remark \ref{rem: compfunctionmonot}, we are mainly interested in the case $0 > \beta \geq -(n-1)\sqrt{|\kappa|}$, so
\begin{align}
    0 < \frac{|\beta|}{(n-1)\sqrt{|\kappa|}} \leq 1.
\end{align}
For convenience we first exclude the case $\beta = -(n-1)\sqrt{|\kappa|}$ and treat it separately later. While this case won't be particularly relevant in our applications as we may replace any bound $\beta<0$ on $H$ from above with some slightly bigger bound $\tilde{\beta} < 0$, including it does not pose any difficulties. Checking with Remark \ref{rem:compgeoms} then yields
\begin{align}
    f_{\kappa,\beta}(t) = \frac{1}{\sqrt{|\kappa|}} \cosh(\sqrt{|\kappa|}t + \text{artanh}\left(\frac{\beta}{(n-1)\sqrt{|\kappa|}}\right)).
\end{align}
Hence, after some manipulations, the forward infinitesimal area comparison constant $C^{A+}=C^{A+}(n,\kappa,\beta,T)$ can be written as
\begin{align}
    C^{A+}(n,\kappa,\beta,T) = \left(\frac{f_{\kappa,\beta}(T)}{f_{\kappa,\beta}(0)}\right)^{n-1} = \left( \frac{\beta}{(n-1)\sqrt{|\kappa|}} \sinh(\sqrt{|\kappa|}T) + \cosh(\sqrt{|\kappa|}T)\right)^{n-1}.
\end{align}
Next, we want to derive an explicit form of the backward infinitesimal area comparison constant $C^{A-} = C^{A-}(n,\kappa,T,\eta)$, cf.\ Lemma \ref{lem: infinites_backwards_comp}. First note that
\begin{align}
    \frac{\tilde{\beta}(n,\kappa,\eta)}{(n-1)\sqrt{|\kappa|}} = \coth\left(\eta \sqrt{|\kappa|}\right) > 1
\end{align}
because $\coth(x) > 1$ for any $x > 0$. Again checking  with Remark \ref{rem:compgeoms}, the relevant function in this case is
\begin{align}
    f_{\kappa,\tilde{\beta}}(t) = \frac{1}{\sqrt{|\kappa|}} \sinh\left(\sqrt{|\kappa|}t + \mathrm{arcoth} \left(\frac{\tilde{\beta}}{(n-1)\sqrt{|\kappa|}}\right)\right).
\end{align}
Inserting $\tilde{\beta}=(n-1)\sqrt{|\kappa|} \coth\left(\eta \sqrt{|\kappa|}\right) $, we get
\begin{align}
    C^{A-}(n,\kappa,T,\eta) = \left(\frac{f_{\kappa,\tilde{\beta}}(0)}{f_{\kappa,\tilde{\beta}}(T)}\right)^{n-1} = \left(\frac{\sinh(\eta \sqrt{|\kappa|})}{\sinh(\sqrt{|\kappa|}(T + \eta))}\right)^{n-1}.
\end{align}
Next we consider the forward d'Alembert comparison constant $C^{\square+} = C^{\square+}(n,\kappa,\beta,T)$, by which we mean a uniform upper bound on the quantity $H_{\kappa,\beta}(t)$ appearing in Lemma \ref{lem: forwarddalembertcomparison}. By monotonicity of the comparison function $H_{\kappa,\beta}$, we get
\begin{align}
   H_{\kappa, \beta}(t) \leq  H_{\kappa, \beta}(T) &= (n-1) \sqrt{|\kappa|}\tanh(\sqrt{|\kappa|}T + \text{artanh}\left(\frac{\beta}{(n-1)\sqrt{|\kappa|}}\right)) \\
   &= \frac{(n-1) \sqrt{|\kappa|}\tanh(\sqrt{|\kappa |}T)+\beta }{1+\tanh(\sqrt{|\kappa|}T) \frac{\beta}{(n-1)\sqrt{|\kappa|}} } =:C^{\square+}.
\end{align}
 The backward d'Alembert comparison constant $C^{\square-} = C^{\square-}(n,\kappa,\eta)$ was already given in Lemma \ref{lem: backwarddalembertcomparison}, namely
\begin{align}
    C^{\square-} = (n-1)\sqrt{|\kappa|}\coth(\eta \sqrt{|\kappa|}).
\end{align}
Note that, contrary to $C^{A+},C^{A-}$ and $C^{\square-}$ which are always strictly positive, $C^{\square+}$ can be negative, zero or positive depending on $n,\kappa, \beta,T$.

We now briefly consider $\beta= -(n-1)\sqrt{|\kappa|}$: Clearly the expressions derived for $C^{A+}(n,\kappa,\beta,T)$ and $ C^{\square+}(n,\kappa,\beta,T)$ above continuously extend to $\beta = -(n-1)\sqrt{|\kappa|}$, where they become $\exp(-\sqrt{|\kappa|}T)^{n-1}$ and $-(n-1)\sqrt{|\kappa|}$, respectively, which coincides with the bounds one gets when retracing the above steps for the model geometry in the $\beta= -(n-1)\sqrt{|\kappa|}$ case, where $f_{\kappa,\beta}(t)=\exp(-\sqrt{|\kappa|}t)$ (cf.\ Remark \ref{rem:compgeoms}).
\end{rem}

\begin{rem}[On the d'Alembert operator]
\label{rem: dAlembertop}
Consider the diffeomorphism $\exp_{\Sigma}^+:\mathcal{D}^+ \to \mathcal{M}^+$. We have already noted that the metric $g$ splits in $\exp_{\Sigma}^+$-coordinates, i.e.\
\begin{align}
\label{eq: splitmetric}
    (\exp_{\Sigma}^+)^*g = -dt^2 + g_t,
\end{align}
where $g_t$ is the restriction of $g$ to the spacelike hypersurface $\mathcal{S}_t \subseteq \mathcal{M}^+$.
We choose coordinates $x = (x^j)$ on $\Sigma$, and thus also coordinates $(t,x^j) = (y^{\mu})$ on $\mathcal{M}^+$.
Now suppose $F$ is a smooth function on $\mathcal{M}^+$. We write $\dot{F}=\partial_t F$. Then
\begin{align}
    \Box_g (F^2) &= -\frac{1}{\sqrt{|\det g|}} \frac{\partial}{\partial y^{\mu}}\left(g^{\mu \nu} \sqrt{|\det g} \frac{\partial F^2}{\partial y^{\nu}}\right) \nonumber\\
    &= -\frac{1}{\sqrt{|\det g|}} \frac{\partial}{\partial t} \left(\underbrace{g^{00}}_{=-1} \underbrace{\sqrt{|\det g|}}_{=\sqrt{\det g_t}} \frac{\partial F^2}{\partial t}\right) - \frac{1}{\sqrt{|\det g|}} \frac{\partial}{\partial x^i} \left(g_t^{ij} \sqrt{|\det g|} \frac{\partial F^2}{\partial x^j}\right) \nonumber \\
    &= 2\dot{F}^2 + 2F\ddot{F} + \frac{\partial_t \sqrt{\det g_t}}{\sqrt{\det g_t}} 2F\dot{F} - \Delta_{g_t}(F^2).
\end{align}
For the time separation, we have that
\begin{align}
    \tau_{\Sigma}(\exp_{\Sigma}^+(t,\x)) = t.
\end{align}
One then immediately checks (using the previous calculation, (\ref{eq: splitmetric}) and Remark \ref{rem: A=sqrt(det g_t)/sqrt( det g_0)}) that
\begin{align}
    \Box_g (\tau_{\Sigma} \circ \exp_{\Sigma}^+) = \frac{\partial_t \sqrt{\det g_t}}{\sqrt{\det g_t}}  = \frac{\partial_t \mathcal{A}}{\mathcal{A}}.
\end{align}
Similarly, the backward time separation $\tau_{\mathcal{S}_T}$ on the $T$-slice satisfies (for $t \leq T$)
\begin{align}
    \tau_{\mathcal{S}_T}(\exp_{\Sigma}(t,\x)) = T - t.
\end{align}
As before,
\begin{align}
    \Box_g(\tau_{\mathcal{S}_T} \circ \exp_{\Sigma}) = -\frac{\partial_t \sqrt{\det g_t}}{\sqrt{\det g_t}}  = - \frac{\partial_t \mathcal{A}}{\mathcal{A}}.
\end{align}
Hence, d'Alembert comparison yields
\begin{align}
     \left| \frac{\partial_t \mathcal{A}}{\mathcal{A}} \right| = \left| \frac{\partial_t \sqrt{\det g_t}}{\sqrt{\det g_t}} \right| \leq \max(C^{\square-},|C^{\square+}|) = C^{\square-}=:C^{\square}_{\max}
\end{align}
on $\Omega_T^+(\Regplus(T))$.
\end{rem}

Using the comparison results from the previous section one can derive estimates for integrals along geodesics from estimates on volume integrals.
The approach taken is similar in spirit to the segment inequality used in the almost splitting theorem by Cheeger--Colding (\cite{cheeger1996lower}; see also \cite{sprouse2000integral}).
This technique is then used to obtain Ricci bounds along single geodesics from world volume inequalities, i.e.\ certain curvature integrals over a spacetime volume. To that end we define a map $\mathcal{F}^T_f$ assigning to a point $\x$ on $\Sigma$ the integral of a non-negative $f$ over the future directed normal geodesic to $\Sigma$ starting at $\x$ up to a time $T$.

\begin{defn}
\label{defn: Ff}
Let $f \geq 0$ be any nonnegative, continuous function on $M$ and $T > 0$. Define
\begin{align}
    \mathcal{F}_f^T: \Sigma \to [0,\infty],\quad \mathcal{F}^T_f(\x):=\int_0^{\min(T,s^+(\x))} f(\exp_{\Sigma}^+(t,\x)) dt.
\end{align}
\end{defn}

Next, we give a segment type inequality, i.e.\ a result inferring bounds on certain 
finite line integrals of $f$ from a region $B\subseteq \Regplus$ from bounds on the integral of $f$ over the ``tube'' $\Omega_T^+(B)$.

\begin{prop}[Segment type inequality]
\label{prop: segmentineq}
Let $f$ be any continuous, nonnegative function on $M$ and $T,\eta > 0$. Then for any measurable subset $B \subseteq \Regplus(T)$ such that $0 < \sigma(B) < \infty$ we have that
\begin{align}
    \inf_{\x \in B} \mathcal{F}^T_f \leq \frac{1}{C^{A-} \sigma(B)} \int_{\Omega_T^+(B)} f \, dvol_g,
\end{align}
where $C^{A-}= C^{A-}(n,\kappa,T,\eta)$ is the backward area comparison constant and $\sigma$ is the (Riemannian) volume measure on $\Sigma$.

\end{prop}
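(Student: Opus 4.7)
The plan is to reduce the worldvolume integral on $\Omega_T^+(B)$ to an iterated integral over $[0,T] \times B$ by pulling back through the normal exponential map, then apply the pointwise lower bound on the infinitesimal area element from Lemma \ref{lem: infinites_backwards_comp}, and finally dominate the average of $\mathcal{F}^T_f$ over $B$ by its infimum.

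More concretely, first I would use Remark \ref{rem: onevolutionsregularpoints}, which identifies $\Omega_T^+(B) = \exp_\Sigma^+([0,T] \times B)$ for $B \subseteq \Regplus(T)$ (here it is important that $s^+(\x) > T+\eta > T$ for $\x \in B$, so the truncation in the definition of $\mathcal{F}^T_f$ is irrelevant and we may just integrate up to $T$). Combined with the change of variables formula from Lemma \ref{lem: normalexpmeasure} (cf.\ equation \eqref{eq: integralinMvsintegralinproduct}), this yields
\begin{equation*}
    \int_{\Omega_T^+(B)} f \, dvol_g = \int_{[0,T] \times B} (f \circ \exp_\Sigma^+)(t,\x)\, \mathcal{A}(t,\x)\, dt \otimes d\sigma(\x).
\end{equation*}

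Next, since $B \subseteq \Regplus(T)$ and $t \in [0,T]$, the infinitesimal backward area comparison (Lemma \ref{lem: infinites_backwards_comp}) applies uniformly and gives $\mathcal{A}(t,\x) \geq C^{A-}(n,\kappa,T,\eta)$. Using nonnegativity of $f$ and Fubini (legitimate since the integrand is nonnegative and measurable), I can estimate
\begin{equation*}
    \int_{\Omega_T^+(B)} f \, dvol_g \geq C^{A-} \int_B \int_0^T f(\exp_\Sigma^+(t,\x))\, dt\, d\sigma(\x) = C^{A-} \int_B \mathcal{F}^T_f(\x)\, d\sigma(\x).
\end{equation*}

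Finally, the elementary bound $\int_B \mathcal{F}^T_f\, d\sigma \geq \sigma(B) \cdot \inf_{\x \in B} \mathcal{F}^T_f(\x)$, valid because $\mathcal{F}^T_f$ is nonnegative and $\sigma(B) < \infty$, together with the strict positivity of $C^{A-}$ and $\sigma(B)$, allows division and yields the claimed inequality. The only genuinely nontrivial ingredient is the uniform lower bound on $\mathcal{A}$, which is exactly what the backward comparison Lemma \ref{lem: infinites_backwards_comp} was designed to provide; everything else is a pull-back through $\exp_\Sigma^+$ and a standard average-versus-infimum argument. I do not expect any serious obstacle, since the careful setup of Sec.~\ref{sec: geometricbackground} and the backward comparison already encode all the geometric content.
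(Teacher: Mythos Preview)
Your proposal is correct and follows essentially the same argument as the paper's own proof: pull back the volume integral via $\exp_\Sigma^+$ to obtain the iterated integral with the area factor $\mathcal{A}$, bound $\mathcal{A}$ below by $C^{A-}$ using Lemma \ref{lem: infinites_backwards_comp}, and then replace the $\sigma$-average of $\mathcal{F}^T_f$ by its infimum. The paper's proof is just a more compressed version of exactly these three steps.
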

\begin{proof}
Using everything discussed so far in a straightforward calculation:
\begin{align}
    \int_{\Omega_T^+(B)} f \, dvol_g &= \int_{B} \int_0^T f(\exp_{\Sigma}^+(t,\x)) \mathcal{A}(t,\x) dt d\sigma(\x) \nonumber \\
    &\geq C^{A-} \int_{B}\int_0^T f(\exp_{\Sigma}^+(t,\x)) dt d\sigma(\x) \nonumber \\
    &\geq C^{A-} \sigma(B) \inf_{\x \in B} \mathcal{F}^T_f.
\end{align}
\end{proof}

\section{The singularity theorems}
\label{sec: singularitytheorems}

We are now ready to present a typical singularity theorem which can be proven using the techniques we have developed.
We decompose a function $f$ into a sum of its positive and negative parts: $f = f_+ + f_-$ where $f_+ \geq 0$ and $f_- \leq 0$  and write $\Ric_-(U,U)$ for the negative part of $p\mapsto \Ric(U_p,U_p)$.

We shall first state a singularity-theorem adjacent result which has the advantage of following very directly from Proposition \ref{prop: segmentineq}.

\begin{theorem}
\label{thm: singthmbasedonsegmineq}
Let $(M,g)$ be a globally hyperbolic spacetime with a smooth, spacelike Cauchy surface $\Sigma \subseteq M$ such that $\Ric(v,v) \geq n\kappa$ for all unit timelike $v \in TM$ and $H \leq \beta$, where $\kappa,\beta < 0$, $\beta \geq -(n-1)\sqrt{|\kappa|}$. 
Let $B\subseteq \Sigma$ with $ 0<\sigma(B)< \infty$. If 
    \begin{align} \label{eq:Ricminus1}
        \frac{1}{\sigma(B)} \int_{\Omega_T^+(B)} |\Ric_-(U_p,U_p)| \, dvol_g(p) \leq C^{A-}(n,\kappa,\eta,T) K,
    \end{align}
for any $T>0,\eta>0, K>0$ satisfying $ K<|\beta|-\frac{n-1}{T} $, then $B \not\subseteq \Regplus(T)$.
\end{theorem}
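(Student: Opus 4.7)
The plan is to argue by contradiction, combining the segment type inequality (Proposition~\ref{prop: segmentineq}) with a Raychaudhuri/Riccati focusing argument along a suitably chosen normal geodesic issued from $\Sigma$. First I would assume that $B \subseteq \Regplus(T)$, and apply Proposition~\ref{prop: segmentineq} with $f := |\Ric_-(U,U)|$ to the hypothesized bound \eqref{eq:Ricminus1}, which after dividing by $C^{A-}(n,\kappa,\eta,T)$ yields $\inf_{\x \in B} \mathcal{F}^T_{|\Ric_-(U,U)|}(\x) \leq K$. Consequently, for every $\delta > 0$, there exists some $\x_\delta \in B$ such that the future unit-speed normal geodesic $\gamma_\delta(t) := \exp_{\x_\delta}(t\vec{n}_{\x_\delta})$ satisfies
\begin{align*}
    \int_0^T |\Ric_-(\dot\gamma_\delta, \dot\gamma_\delta)|(s)\,ds \leq K + \delta.
\end{align*}
Because $\x_\delta \in \Regplus(T)$, the geodesic $\gamma_\delta$ is defined and $\Sigma$-maximizing on all of $[0,T+\eta]$, and in particular has no $\Sigma$-focal points on $(0,T+\eta]$; the strategy is to produce such a focal point in $(0,T)$, contradicting $\Sigma$-maximality.

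Next I would analyse the expansion $\theta(t) := (\mathrm{div}\,U)(\gamma_\delta(t)) = H(\gamma_\delta(t))$ of the normal congruence along $\gamma_\delta$, which is smooth on $[0,T+\eta]$ by the absence of focal points. Since the congruence is hypersurface-orthogonal, its vorticity vanishes and the shear squared is nonnegative, so the Lorentzian Raychaudhuri equation yields
\begin{align*}
    \dot\theta(t) + \frac{\theta(t)^2}{n-1} + \Ric(\dot\gamma_\delta,\dot\gamma_\delta)(t) \leq 0,\qquad \theta(0) = H(\x_\delta)\leq \beta < 0.
\end{align*}
Integrating from $0$ to $t\in [0,T]$ and combining $-\theta(0) \geq |\beta|$ with $\int_0^t \Ric(\dot\gamma_\delta,\dot\gamma_\delta)\, ds \geq -\int_0^T |\Ric_-|\, ds \geq -(K+\delta)$ then gives
\begin{align*}
    -\theta(t) \geq c + \frac{1}{n-1}\int_0^t \theta(s)^2\, ds,\qquad c := |\beta| - K - \delta.
\end{align*}
By the strict hypothesis $K < |\beta| - (n-1)/T$, I can fix $\delta > 0$ small enough that $c > (n-1)/T$, i.e.\ $t^\ast := (n-1)/c < T$.

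The final step is to compare $-\theta$ with the blow-up solution $w(t) = c(n-1)/((n-1)-ct)$ of the ``pure focusing'' Volterra equation $w(t) = c + \tfrac{1}{n-1}\int_0^t w(s)^2\, ds$, which tends to $+\infty$ as $t \uparrow t^\ast$. A monotone Picard iteration $\alpha_0 := c$, $\alpha_{k+1}(t) := c + \frac{1}{n-1}\int_0^t \alpha_k(s)^2\, ds$ will show $\alpha_k \uparrow w$ locally uniformly on $[0,t^\ast)$, while an induction using $-\theta \geq \alpha_k \geq 0$ and hence $\theta^2 \geq \alpha_k^2$ will yield $-\theta \geq \alpha_{k+1}$ at each stage; thus $-\theta \geq w$ on $[0,t^\ast)$, so $\theta$ must diverge to $-\infty$ at some $t_0 \leq t^\ast < T$. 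This is precisely a $\Sigma$-focal point on $\gamma_\delta$ strictly before $T+\eta$, contradicting the assumption $\x_\delta \in \Regplus(T)$ and hence $B \subseteq \Regplus(T)$. The hard part will be handling the Raychaudhuri step cleanly on the entire interval where $\gamma_\delta$ is focal-point-free and converting the $L^1$-bound on $\Ric_-$ into the integrated focusing inequality above without losing constants; once this is done the algebraic threshold $K < |\beta| - (n-1)/T$ emerges naturally as precisely what is needed to guarantee $t^\ast < T$.
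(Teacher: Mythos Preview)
Your proof is correct and reaches the same threshold $K < |\beta| - (n-1)/T$, but the focusing step differs from the paper's. After extracting the single geodesic with small $\int_0^T |\Ric_-|$ (which you and the paper do identically via Proposition~\ref{prop: segmentineq}), the paper proceeds by \emph{index form / second variation}: it tests the maximizing geodesic $c$ against the variational vector fields $V_i(t)=(1-t/T)E_i(t)$, sums the second variation formula over an orthonormal frame, and obtains directly
\[
0 \geq \sum_i L_i''(0) = -H(\x) + \int_0^T \Big(1-\tfrac{t}{T}\Big)^2 \Ric(\dot c,\dot c)\,dt - \frac{n-1}{T} \geq |\beta| - (K+\varepsilon) - \frac{n-1}{T},
\]
giving the contradiction in one line. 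Your route instead integrates the Raychaudhuri inequality and performs a Riccati/Volterra comparison with the explicit blow-up solution $w(t)=c(n-1)/((n-1)-ct)$ via monotone Picard iteration. Both are classical and equivalent in strength here; the index-form argument is shorter and purely algebraic once the second variation formula is invoked, while your ODE comparison is closer in spirit to the Sprouse-type $L^1$-Myers results the paper cites and makes the role of the blow-up time $t^\ast=(n-1)/c$ more transparent geometrically. One minor point: your final sentence says $\theta\to -\infty$ ``is precisely a $\Sigma$-focal point''; more accurately, the divergence of $\theta$ on $[0,t^\ast)$ contradicts the assumed smoothness of $\theta$ on $[0,T+\eta]$ (guaranteed by $\x_\delta\in\Regplus(T)$), which is the cleanest way to phrase the contradiction.
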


\begin{proof}
Assume to the contrary that $B \subseteq \Regplus(T)$. Set $f(p):=|\Ric_-(U_p,U_p)|$, then Proposition \ref{prop: segmentineq} and the assumed estimate \eqref{eq:Ricminus1} imply that
\begin{align}
    \inf_{x \in B} \mathcal{F}_f^T \leq K.
\end{align}
Let $\varepsilon > 0$ be so small that $K + \varepsilon < |\beta|$, then there exists $\x \in B$ such that
\begin{align}
    \mathcal{F}_f^T(\x)=\int_0^T f(\exp_{\x}(t\vec{n}_{\x})) dt \leq K + \varepsilon.
\end{align}

Now the result follows directly from classical variational methods: Denoting by $c(t):=\exp_{\x}(t \vec{n}_{\x})$ the normal geodesic starting in $\x$, by assumption $c$ maximizes the $\Sigma$-time separation up to $p:=\exp_{\x}(T\vec{n}_{\x}) \in I^+(\Sigma)$. Let $E_i(t)$ be spacelike parallel orthonormal vector fields along $c$ such that $\{\dot{c},E_1,\dots,E_{n-1}\}$ is an orthonormal frame along $c$. Consider variations of $c$ starting in $\Sigma$ and ending in $p$ defined by variational vector fields $V_i(t):=(1-t/T)E_i(t)$. Denote by $L_i(t)$ the length functional of the corresponding variation.
Using the second variation formula it is a straightforward calculation (done e.g.\ in section 2 of \cite{Fewster:2019bjg} with a different sign convention) to arrive at
\begin{align}
   \sum_{i=1}^{n-1} L_i''(0) &= \left\langle \dot{c}(0), \sum_{i=1}^{n-1} II(E_i(0),E_i(0)) \right\rangle + \int_0^T \left(1- \frac{t}{T}\right)^2 \Ric(\dot{c},\dot{c}) dt - \frac{n-1}{T}.
\end{align}
Note that the first term on the right hand side is equal to $-H(\x)$ (recall that we follow the conventions of \cite{treude2013volume}). Since $c$ maximizes the $\Sigma$-time separation up to $p$, we get
\begin{align}
    0 \geq \sum_{i=1}^{n-1} L_i''(0) &= -H(\x) + \int_0^T \left(1- \frac{t}{T}\right)^2 \Ric(\dot{c},\dot{c}) dt - \frac{n-1}{T} \nonumber \\
    &\geq |\beta| - \int_0^T \left(1 - \frac{t}{T}\right)^2 f(c(t)) dt - \frac{n-1}{T} \nonumber \\
    &\geq |\beta| - \int_0^T f(c(t))dt - \frac{n-1}{T} \nonumber \\
    &\geq |\beta| - K - \varepsilon - \frac{n-1}{T},
\end{align}
which implies that
\begin{align}
    T \leq \frac{n-1}{|\beta| - K - \varepsilon}.
\end{align}
Taking $\varepsilon \to 0$ we see that this contradicts our assumption on the relationship between $T$, $K$ and $|\beta|$.
\end{proof}

While this is not a singularity theorem per se, as the conclusion of $B\not \subseteq \Regplus$ only tells us that either a geodesic stops existing {\em or} maximizing the distance at or before $T+\eta$, its structure is very similar:
Given the usual assumptions of singularity theorems (on $\Ric$, causality and the mean curvature of a Cauchy surface) it implies bounds on regions evolving from $\Sigma$ on which $\exp$ stays injective (even a diffeomorphism).\\

Let us now state a singularity theorem which still employs the segment inequality type methods, but assumes energy conditions inspired by QEIs.
It will be used to prove geodesic incompleteness for the case of the non-minimally coupled classical scalar field in Sec.~\ref{sec: strongenergyinequality}.

\begin{theorem}
\label{theorem: m=1}
    Let $(M,g)$ be a smooth globally hyperbolic spacetime of dimension $n > 2$, $\Sigma \subseteq M$ a smooth, spacelike Cauchy surface. Suppose the following hold:
    \begin{enumerate}
        \item[(i)] There is $\kappa < 0$ such that $\Ric(v,v) \geq n\kappa$ for all unit timelike $v \in TM$.
        \item[(ii)] There exist $Q_1, Q_2 > 0$ such that 
        for all $F \in C^{\infty}_c(\mathcal{M}^+)$,  
        we have the following integral bound on the Ricci tensor:
        \begin{align}\label{eqn:condi}
          \int_{M}
          \Ric(U_p,U_p) F(p)^2 \, dvol_g(p) \geq -Q_1\|F\|_{L^2(M)}^2 - Q_2 \|\dot{F}\|_{L^2(M)}^2.
        \end{align}
        \item[(iii)] the mean curvature of $\Sigma $ satisfies
			\be
			-H  \geq \min\left\{(n-1)\sqrt{|\kappa|}\coth(\sqrt{|\kappa|}\tau ) ,\nu_*(n,\kappa, \tau) \right\}
			\ee			
			everywhere on $\Sigma$, where 
		\bea
		\label{eqn:nustar}
		  &&\nu_*(n,\kappa,\tau):=\min_{\{(T,\eta): T+\eta=\tau\}}\min_{ \tau_0\in (0,T) } \bigg\{ \left(Q_1 + Q_2\frac{(C^{\square}_{\max}(n,\kappa, \eta))^2}{4} + Q_2\frac{C^{\square}_{\max}(n,\kappa, \eta)}{2T}\right) \frac{T}{3} \nonumber \\ && \qquad +Q_2 \left(1+ \frac{TC^{\square}_{\max}(n,\kappa, \eta)}{2}\right) \left(\frac{1}{\tau_0}+\frac{1}{T-\tau_0}\right)+ n |\kappa| \tau_0 \frac{2}{3} +\frac{n-1}{T-\tau_0} \bigg\} 
		  \,.\eea
  
    \end{enumerate}
     
    Then no future-directed timelike curve emanating from $\Sigma$ has length greater than $\tau $ and $(M,g)$ is future timelike geodesically incomplete.
    \begin{proof}
  For $\min((n-1)\sqrt{|\kappa|}\coth(\sqrt{|\kappa|}\tau),\nu_*) =(n-1)\sqrt{|\kappa|}\coth(\sqrt{|\kappa|}\tau)$ this follows from classical comparison geometry techniques, cf.~e.g.~\cite[Thm.~4.2]{graf2016volume} (note that $H\leq  (n-1)\sqrt{|\kappa|}\coth(-\sqrt{|\kappa|}\tau)$ implies $b_{\kappa,\beta}=\tau$).
  
  Let us therefore assume that $\min((n-1)\sqrt{|\kappa|}\coth(\sqrt{|\kappa|}\tau),\nu_*) =\nu_*$. Now assume for a contradiction that  there is a future-directed timelike curve emanating from $\Sigma$ of length greater than $\tau=T+\eta$. Then $\Regplus(T)$ is nonempty. 
  Given any $f \in C^{\infty}_c(\R)$ with support contained in $[0,T]$, we pick an arbitrary, non-zero $h \in C^{\infty}_c(\Sigma)$ with support contained in $\Regplus(T)$ and consider the function $F(t,\x):=f(t)h(\x)(\mathcal{A}(t,\x))^{-1/2}$ (in exponential coordinates). Then $\|F\|_{L^2(M)}^2 = \|f\|_{L^2(\R)}^2 \|h\|_{L^2(\Sigma)}^2$ and
  
    \begin{align}
        \dot{F}(t,\x) = \dot{f}(t) h(\x) \mathcal{A}(t,\x)^{-1/2} -\frac{1}{2} f(t)h(\x) \mathcal{A}(t,\x)^{-3/2} \partial_t \mathcal{A}(t,\x),
    \end{align}
    hence
    \begin{align}
        \dot{F}(t,\x)^2 &= \dot{f}(t)^2 h(\x)^2 \mathcal{A}(t,\x)^{-1} + \frac{1}{4} f(t)^2 h(\x)^2 \mathcal{A}(t,\x)^{-3} (\partial_t \mathcal{A}(t,\x))^2 \nonumber \\
        &\phantom{aa}- \dot{f}(t) f(t) h(\x)^2 \mathcal{A}(t,\x)^{-2} \partial_t \mathcal{A}(t,\x).
    \end{align}
By d'Alembert comparison, we can estimate $$|\partial_t \mathcal{A} \cdot \mathcal{A}^{-1}| \leq C^{\square}_{\max}(n,\kappa, \eta):=\max(|C^{\square+}(n,\kappa, \eta)|,C^{\square-}(n,\kappa, \eta)) = C^{\square-}(n,\kappa, \eta),$$ as in Remark \ref{rem: dAlembertop}. Thus, dropping the dependence of $C^{\square}_{\max}$ on $n,\kappa,\eta$ for now, and using the classical inequality
\be
\label{eqn:elineq}
ab \leq \varepsilon a^2 + \frac{1}{4 \varepsilon} b^2 \,,
\ee
for $a,b \geq 0$ and $\varepsilon:=T/2$ we have
    \bea
        \|\dot{F}\|_{L^2(M)}^2 &\leq& \|\dot{f}\|_{L^2(\R)}^2 \|h\|_{L^2(\Sigma)}^2 + \frac{(C^{\square}_{\max})^2}{4} \|f\|_{L^2(\R)}^2 \|h\|_{L^2(\Sigma)}^2 + \frac{C^{\square}_{\max} }{2T}(T^2\|\dot{f}\|_{L^2(\R)}^2 \nonumber \\
        &&+  \|f\|_{L^2(\R)}^2)\|h\|_{L^2(\Sigma)}^2.
    \eea
On the other hand,
    \begin{align}
        \int_{\Omega_T^+(\Regplus(T))} \Ric(U_p,U_p) F(p)^2 \, dvol_g(p) \leq \|h\|_{L^2(\Sigma)}^2 \sup_{\x \in \mathrm{supp}(h)} \int_0^T \Ric(\dot{c}_{\x}(t),\dot{c}_{\x}(t)) f(t)^2 \, dt,
    \end{align}
 where $\dot{c}_{\x}(t)=U_{\exp_{\Sigma}^+(t,\x)}$. Since $h$ has compact support in $\Regplus(T)$, the supremum on the right hand side is actually a maximum. 
 Canceling the $\|h\|_{L^2(\Sigma)}^2$, by \eqref{eqn:condi} we are left with
    \begin{align}
        &\min_{\x \in \mathrm{supp}(h)} \left(- \int_0^T \Ric(\dot{c}_{\x}(t),\dot{c}_{\x}(t)) f(t)^2 \, dt  \right)\nonumber\\
        &\leq Q_1 \|f\|_{L^2(\R)}^2 + Q_2\left(\|\dot{f}\|_{L^2(\R)}^2 +\frac{(C^{\square}_{\max})^2}{4} \|f\|_{L^2(\R)}^2 + \frac{C^{\square}_{\max}}{2T} \left(T^2\|\dot{f}\|_{L^2(\R)}^2 + \|f\|_{L^2(\R)}^2\right)\right) \nonumber \\
        &= \left(Q_1 + Q_2\frac{(C^{\square}_{\max})^2}{4} + Q_2\frac{C^{\square}_{\max}}{2T}\right)\|f\|_{L^2(\R)}^2 + Q_2 \left(1+ \frac{TC^{\square}_{\max}}{2}\right)\|\dot{f}\|_{L^2(\R)}^2. \label{eqn:line}
    \end{align}
     Fix $\x_0\in \mathrm{supp}(h)$ where the minimum is achieved and let $c:[0,T]\to M$ be the distance-to-$\Sigma$-maximizing geodesic $c_{\x_0}$ starting orthogonally to $\Sigma $ at $\x_0$.

 Since $f$ is still free (the only restriction is that its support is contained in $[0,T]$), we fix any $\tau_0 \in (0,T)$ and -- following the definitions in \cite[Lem.~4.1]{Fewster:2019bjg} for $m=1$ -- choose $f(t)=\bar{f}(t)\phi(t)$ with  
  $$ \bar{f}(t):=\begin{cases} 1\quad\quad\quad\quad\quad t\in[0,\tau_0]\\ 1-\frac{t-\tau_0}{T-\tau_0}\quad\quad t\in[\tau_0, T]
\end{cases} $$
and 
$$ \phi(t):=\begin{cases} \frac{t}{\tau_0} \quad\quad t\in[0,\tau_0]\\ 1 \quad\quad t\in[\tau_0, T].
\end{cases} $$
With this choice we closely follow the methods used in \cite{Fewster:2019bjg}, especially Proposition 2.2 and partly Lemma 4.1 there and will refer to that article for details of the index form methods applied.
Note that $f$ is piece-wise linear vanishing at $0$ and $T$, so even though $f$ is not smooth all our above calculations, in particular \eqref{eqn:condi} and \eqref{eqn:line}, still hold (this is easily seen by approximating $f$ with compactly supported smooth functions whose derivatives converge to the derivative of $f$ in $L^2$).
The next step is to estimate 
     $$J[\bar{f}]= \int_0^T \left(-\Ric(\dot c,\dot c) \bar{f}^2+(n-1)(\dot {\bar{f}})^2\right) dt $$ 
from above (note the different sign convention for $\Ric$ in \cite{Fewster:2019bjg}). Using the identity $\bar f^2= (\phi \bar f)^2+(1-\phi^2)$, we obtain:
     \bea \label{eq:nutau0}
     J[\bar{f}] &=& \int_0^T -\Ric(\dot c,\dot c) \left((\phi(t) \bar{f}(t))^2+(1-\phi(t)^2)\right) dt + (n-1) ||\dot{\bar{f}}(t)||^2 \nonumber\\
 &\leq& \left(Q_1 + Q_2\frac{(C^{\square}_{\max})^2}{4} + Q_2\frac{C^{\square}_{\max}}{2T}\right)\|\phi \bar{f}\|_{L^2(\R)}^2 + Q_2 \left(1+ \frac{TC^{\square}_{\max}}{2}\right)\left\|\frac{d}{dt}(\phi\, \bar{f})\right\|_{L^2(\R)}^2 \nonumber \\ &&+  n|\kappa| \int_0^{\tau_0} (1-\phi^2) dt + 
 \frac{n-1}{T-\tau_0} = \nonumber\\
 &=& \left(Q_1 + Q_2\frac{(C^{\square}_{\max})^2}{4} + Q_2\frac{C^{\square}_{\max}}{2T}\right) \frac{T}{3}+ Q_2 \left(1+ \frac{TC^{\square}_{\max}}{2}\right) \left(\frac{1}{\tau_0}+\frac{1}{T-\tau_0}\right)\nonumber \\
 &&+ n|\kappa| \tau_0 \frac{2}{3} +\frac{n-1}{T-\tau_0}:=\nu_{\tau_0,\eta} \,,
\eea
where we estimated $\Ric(\dot c, \dot c)\geq n \kappa$ and integrated the remaining terms involving $\phi$.

By \cite[Lemma 4.1]{Fewster:2019bjg} there is a focal point along $c$ if $J[\bar{f}]\leq -H|_{\x_0}$.
Note that by \eqref{eq:nutau0} $J[\bar{f}]\leq \nu_{\tau_0,\eta}$. Since this works for any $\tau_0\in (0,T)$ and any $\eta, T>0$ such that $T+\eta=\tau$, we may choose $\tilde\tau_0$ and $\tilde \eta+T$ such that $\nu_{\tilde\tau_0,\tilde\eta}$ becomes minimal (note that such a minimizing $\tau_0$ exists in the open interval because the expression goes to infinity as $\tau_0\to 0,T$).
This leads to $J[\bar{f}] \leq \nu_{\tilde\tau_0,\tilde\eta}= \nu_* \leq -H|_{x_0}$ by the assumption \eqref{eqn:nustar} and there must be a focal point along $c$ which in turn cannot be maximizing up to time $\tau=T+\eta$,
contradicting $\x_0\in \Regplus(T)$.
    \end{proof}
\end{theorem}

\begin{rem}
\label{rem: m>1case}\, 
\begin{enumerate}
    \item 
    The proof of Theorem \ref{theorem: m=1} really only needs the integral bound \eqref{eqn:condi} for all $F\in C_c^\infty(\Omega_T^+(\Regplus(T)))$. However, as that set is in general hard to control we chose to formulate Theorem \ref{theorem: m=1} as is.
    \item In the formulation of Theorem \ref{theorem: m=1} higher derivatives of $F$ could be accommodated for at the price of imposing bounds on the positive and negative parts individually (writing a function $b$ as $b=b^+ + b^-$, $b^+ \geq 0$, $b^- \leq 0$), i.e.\  with bounds of the form ($Q_+,Q_-,Q_+^{(m)},Q_-^{(m)} > 0$)
\begin{align}
    &\int_{\Omega_T^+(\Regplus(T))} \Ric_+(U_p,U_p) F(p)^2 \, dvol_g(p) \geq Q_+ \|F\|_{L^2(M)}^2 + Q_+^{(m)} \|F^{(m)}\|_{L^2(M)}^2,\\
    &\int_{\Omega_T^+(\Regplus(T))} |\Ric_-(U_p,U_p)| F(p)^2 \, dvol_g(p) \leq Q_- \|F\|_{L^2(M)}^2 + Q_-^{(m)} \|F^{(m)}\|_{L^2(M)}^2.
\end{align}
Then one can proceed as above to prove the existence of a singularity: Indeed, choosing $F$ to be of the form $F(t,\x) = f(t) h(\x)$ as before -- just without the extra factor of $\mathcal{A}(t,\x)^{-\frac{1}{2}}$ --, one estimates  $\int \Ric_+ F^2 \, dvol_g$ forwards (using forward comparison) and backwards (using the assumption as well as backward comparison), and similarly for $\int \Ric_- F^2 \, dvol_g$, to get an estimate on $\inf (-\int \Ric f^2)$ as before. From these estimates, it is then clear by following \cite{Fewster:2019bjg} how $\nu_*$ has to be chosen, and the rest of the proof is completely analogous to Theorem \ref{theorem: m=1}.
\end{enumerate}
\end{rem}

\section{The strong energy inequality}
\label{sec: strongenergyinequality}

In this section we want to show that the inequality \eqref{eqn:condi} is satisfied by the non-minimally coupled classical scalar field, an example that violates the SEC. We additionally discuss the advantages of this bound compared to the worldline one.

\subsection{The bound on the effective energy density}

This subsection is mainly based on results from \cite{Brown:2018hym}. The quantity of interest is the effective energy density (EED) given in Eq.~\eqref{eqn:eed}. The field equation for non-minimally coupled scalar fields is
\be \label{eqn:field}
(\Box_g+m^2+\xi R)\phi=0 \,,
\ee
where $\xi$ is the coupling constant and $R$ is the Ricci scalar. 
The constant $m$ has dimensions of inverse length, which would be the inverse Compton wavelength if one regarded~\eqref{eqn:field} as the starting-point for a quantum field theory with massive particles. The Lagrangian density is
\be
\label{eqn:lagrangian}
L[\phi]=\frac{1}{2} [(\nabla \phi)^2-(m^2+\xi R)\phi^2 ] \,,
\ee
where $(\nabla \phi)^2=-g^{\mu \nu} (\nabla_\mu \phi)(\nabla_\nu \phi)$. The stress energy tensor is obtained by varying the action with respect to the metric, giving
\be
\label{eqn:tmunu}
T_{\mu \nu}=(\nabla_\mu \phi)(\nabla_\nu \phi)-\frac{1}{2} g_{\mu \nu} (m^2 \phi^2-(\nabla \phi)^2)+\xi(-g_{\mu \nu} \Box_g-\nabla_\mu \nabla_\nu+G_{\mu \nu}) \phi^2 \,,
\ee
where $G_{\mu \nu}$ is the Einstein tensor. We should observe here that the field equation and the Lagrangian reduce to those of minimal coupling for flat spacetimes but the stress-energy tensor does not. 

The trace of the stress-energy tensor is given by
\be
\label{eqn:trace}
T = \left(\frac{n}{2}-1\right) (\nabla\phi)^2 - \frac{n}{2}m^2\phi^2 + \xi\left(-(n-1)\Box_g +  \left(1-\frac{n}{2}\right)R \right)\phi^2 \,.
\ee
We can write the stress energy tensor without mass dependence as
\bea
\label{eqn:tmbox}
T_{\mu \nu}&=&(1-2\xi)(\nabla_\mu \phi)(\nabla_\nu \phi)+\frac{1}{2} (1-4\xi)( \phi \Box_g \phi+(\nabla \phi)^2)g_{\mu \nu}-2\xi \phi \nabla_\nu \nabla_\mu \phi \nonumber\\
&& \qquad \qquad \qquad\qquad \qquad+\xi R_{\mu \nu} \phi^2 -\frac{1}{2} g_{\mu \nu} (\phi P_\xi \phi)\,,
\eea
where $P_\xi=\Box_g+m^2+\xi R$ is the Klein-Gordon operator. We also used the identity
\be
\label{eqn:boxidentity}
\phi \Box_g \phi=\frac{1}{2} \Box_g \phi^2-(\nabla\phi)^2 \,.
\ee
Alternatively we can remove the dependence on the D'Alembertian operator 
\bea
\label{eqn:tmmass}
T_{\mu \nu}&=&(1-2\xi)(\nabla_\mu \phi)(\nabla_\nu \phi)-\frac{1}{2} (1-4\xi)( m^2\phi^2+\xi R \phi^2-(\nabla \phi)^2)g_{\mu \nu}-2\xi \phi \nabla_\nu \nabla_\mu \phi \nonumber\\
&& \qquad \qquad \qquad\qquad \qquad+\xi R_{\mu \nu} \phi^2 -2\xi g_{\mu \nu} (\phi P_\xi \phi)\,.
\eea
Similarly for the trace, without explicit mass dependence we have
\be
\label{eqn:trbox}
T=\left(\frac{n}{2}-1-2\xi(n-1)\right) (\nabla\phi)^2+\left(\frac{n}{2}-2\xi(n-1)\right) \phi \Box_g \phi+\xi R\phi^2-\frac{n}{2} \phi P_\xi \phi \,,
\ee
while without dependence on the D'Alembertian operator
\bea
\label{eqn:trmass}
T&=&\left(\frac{n}{2}-1-2\xi(n-1)\right) (\nabla\phi)^2+\left(-\frac{n}{2}+2\xi(n-1)\right) m^2\phi^2 +\xi\left( 2\xi(n-1)+1-\frac{n}{2}\right) R\phi^2 \nonumber \\
&& \qquad \qquad -2\xi(n-1) \phi P_\xi \phi \,.
\eea
The EED without mass dependence is obtained by combining Eqs.~\eqref{eqn:eed}, \eqref{eqn:tmbox} and \eqref{eqn:trbox}
\bea
\label{eqn:rhobox}
\rho_U&=&(1-2\xi)U^\mu U^\nu (\nabla_\mu \phi) (\nabla_\nu \phi) +\frac{1-2\xi}{n-2} \phi \Box_g\phi-\frac{2\xi}{n-2} (\nabla \phi)^2 -2\xi \phi U^\mu U^\nu \nabla_\mu \nabla_\nu \phi \nonumber \\
&&+\xi R_{\mu \nu} U^\mu U^\nu \phi^2+\frac{\xi}{n-2}R\phi^2-\frac{1}{n-2}\phi P_\xi \phi \,,
\eea
where we assumed $U^\mu$ is a unit timelike vector. We get the EED without dependence on the D'Alembertian by combining Eqs.~\eqref{eqn:eed}, \eqref{eqn:tmmass} and \eqref{eqn:trmass}
\bea \label{eqn:rhomass}
\rho_U &=&  (1-2\xi) U^\mu U^\nu (\nabla_\mu \phi)(\nabla_\nu \phi)-\frac{1-2\xi}{n-2}  m^2 \phi^2 -\frac{2\xi}{n-2} (\nabla \phi)^2 -2\xi U^\mu U^\nu \phi \nabla_\mu \nabla_\nu  \phi  \nonumber\\
&&\quad +\xi U^\mu U^\nu R_{\mu \nu} \phi^2+\frac{2\xi^2}{n-2}  R \phi^2-\frac{2\xi}{n-2} (\phi P_\xi \phi)    \,.
\eea
The last term can be discarded ``on shell''  i.e. for $\phi$ satisfying  Eq.~(\ref{eqn:field}). For $\xi=0$ the EED further reduces to
\be
\rho_U=U^\mu U^\nu (\nabla_\mu \phi)(\nabla_\nu \phi)+\frac{1}{n-2} \phi \Box_g \phi \,,
\ee
or
\be \label{eqn:minSED}
\rho_U= U^\mu U^\nu (\nabla_\mu \phi)(\nabla_\nu \phi)-\frac{1}{n-2} m^2 \phi^2  \,.
\ee

We want to bound the following quantity where $x$ is a spacetime variable, $F(x)$ is a smearing function with compact support and the integral is over a spacetime volume
\be
\int dvol_g F^2(x) \rho_U \,.
\ee
In \cite{Brown:2018hym} the authors derived two bounds for this quantity from Eqs.~\eqref{eqn:rhobox} and \eqref{eqn:rhomass}.
Unlike the worldline case, for worldvolume averaging we can use successive integration-by-parts to derive a bound that has no explicit mass-dependence and remains free from any field derivatives.

Introducing $V^\mu = F(x)U^\mu$, the averaged EED for the nonminimally coupled scalar field ``on shell" obeys the following theorem \cite[Thm.\ 2]{Brown:2018hym}.
\begin{theorem}
	\label{the:clasvol}
	If $M$ is a manifold with metric $g$ and dimension $n \geq 3$, $T_{\mu \nu}$ the stress-energy tensor of a scalar field with coupling constant $\xi \in [0,\xi_v]$
	and $f$ a real valued function on $M$ with compact support, then ``on shell ''  
	\be
	\int dvol_g \, \rho_U \, F^2(x)  \geq -\min \{ \mathcal{B}_1,\mathcal{B}_2\} \,,
	\ee
	where
	\bea
	\mathcal{B}_1&=&\int dVol \bigg\{ \frac{1-2\xi}{n-2}  m^2 F^2(x) -\frac{2\xi^2 R}{n-2} F^2(x) \nonumber\\
	&&\qquad \qquad \qquad \qquad+ \xi  \left[(\nabla_\mu V^\mu)^2+(\nabla_\mu V^\nu)(\nabla_\nu V^\mu)\right] \bigg\}  \phi^2 \,,
	\eea
	and 
	\bea
	\label{eqn:b2curve}
	\mathcal{B}_2&=&\int dvol_g \bigg\{ \frac{1-2\xi}{2(n-2)}(\Box_g F^2(x)) - \frac{\xi R}{n-2} F^2(x) \nonumber\\
	&&\qquad \qquad \qquad \qquad+\xi  [(\nabla_\mu V^\mu)^2+(\nabla_\mu V^\nu)(\nabla_\nu V^\mu)] \bigg\} \phi^2 \,.
	\eea
\end{theorem}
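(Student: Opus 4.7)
The plan is to establish both bounds by systematic integration by parts starting from the two on-shell representations of $\rho_U$ given in \eqref{eqn:rhobox} and \eqref{eqn:rhomass}, discarding in each case the final $\phi P_\xi \phi$ term since it vanishes on shell. Multiplying by $F^2$ and integrating over $M$, all boundary terms vanish thanks to the compact support of $F$. The key structural observation is that $F^2 U^\mu U^\nu (\nabla_\mu\phi)(\nabla_\nu\phi) = (V^\mu\nabla_\mu\phi)^2 \geq 0$ with $V^\mu := F U^\mu$, while the isotropic gradient decomposes as $F^2 (\nabla\phi)^2 = (V^\mu\nabla_\mu\phi)^2 - F^2 h^{\mu\nu}(\nabla_\mu\phi)(\nabla_\nu\phi)$, where $h^{\mu\nu} = g^{\mu\nu} + U^\mu U^\nu$ is the positive semi-definite spatial projector orthogonal to $U$; for $\xi \geq 0$ its second summand contributes with a sign favourable for a lower bound and may be dropped.

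For $\mathcal{B}_1$ the starting point is \eqref{eqn:rhomass}. I would integrate the term $-2\xi\int F^2 \phi\, V^\mu V^\nu \nabla_\mu\nabla_\nu\phi$ by parts twice to obtain $2\xi \int (V^\mu\nabla_\mu\phi)^2 - \xi \int \phi^2\, \nabla_\nu\nabla_\mu(V^\mu V^\nu)$. Expanding $\nabla_\nu\nabla_\mu(V^\mu V^\nu)$ via the Leibniz rule and the Ricci identity $[\nabla_\mu,\nabla_\nu]V^\rho = R^\rho{}_{\sigma\mu\nu}V^\sigma$ produces the combination $(\nabla_\mu V^\mu)^2 + (\nabla_\mu V^\nu)(\nabla_\nu V^\mu) + R_{\mu\nu}V^\mu V^\nu + 2V^\mu \nabla_\mu(\nabla_\nu V^\nu)$; the Ricci contribution exactly cancels the explicit $\xi R_{\mu\nu} U^\mu U^\nu \phi^2$ piece present in $\rho_U$. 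The trailing $V^\mu\nabla_\mu(\nabla_\nu V^\nu)$ is integrated by parts once more, generating an additional $\phi^2(\nabla_\mu V^\mu)^2$ contribution and a cross term proportional to $\int \phi (V^\mu\nabla_\mu\phi)(\nabla_\nu V^\nu)$; completing the square against the surviving coefficient of $(V^\mu\nabla_\mu\phi)^2$ and dropping the manifestly non-negative square then yields a lower bound of the form $-\mathcal{B}_1$. The restriction $\xi \in [0, \xi_v]$ enters precisely here, as the condition guaranteeing that the residual coefficient in front of $\phi^2 (\nabla_\mu V^\mu)^2$ stays no worse than $-\xi$.

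The bound $\mathcal{B}_2$ is obtained by the same strategy but starting from \eqref{eqn:rhobox}: the mass term is traded for the $\tfrac{1-2\xi}{2(n-2)} \Box_g F^2$ contribution via an additional integration by parts applied to $\tfrac{1-2\xi}{n-2}\int F^2 \phi \Box_g\phi$, and the remaining $(\nabla\phi)^2$ pieces are collected and handled by the same spatial-projector decomposition as above. The main obstacle will be the careful bookkeeping: tracking coefficients through the nested integrations by parts and the Ricci-identity rewriting, ensuring that every spurious first-derivative cross term either combines into a completable square against $(V^\mu\nabla_\mu\phi)^2$ or cancels outright, and that the eventual numerical coefficients of $\phi^2(\nabla_\mu V^\mu)^2$ and $\phi^2(\nabla_\mu V^\nu)(\nabla_\nu V^\mu)$ come out to be exactly $\xi$ as stated — which in turn is what pins down the allowed coupling range $\xi_v$.
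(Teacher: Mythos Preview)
The paper does not actually prove this theorem: it is quoted verbatim from \cite[Thm.~2]{Brown:2018hym} and used as input, so there is no ``paper's own proof'' to compare against. That said, your outline is precisely the strategy employed in the original reference. Starting from the two on-shell expressions \eqref{eqn:rhomass} and \eqref{eqn:rhobox}, rewriting $\phi\,\nabla_\mu\nabla_\nu\phi = \tfrac12\nabla_\mu\nabla_\nu(\phi^2)-(\nabla_\mu\phi)(\nabla_\nu\phi)$, integrating the second-derivative term onto $V^\mu V^\nu$, expanding $\nabla_\nu\nabla_\mu(V^\mu V^\nu)$ via the Ricci identity to cancel the explicit $R_{\mu\nu}U^\mu U^\nu$ contribution, and then completing the square in the remaining $(V^\mu\nabla_\mu\phi)$ cross-terms is exactly how \cite{Brown:2018hym} obtains both $\mathcal{B}_1$ and $\mathcal{B}_2$. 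Your identification of the spatial-projector decomposition for the $(\nabla\phi)^2$ term and of the role of the coupling range in keeping the residual $(\nabla_\mu V^\mu)^2$ coefficient under control is also correct.

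One small slip: in the displayed term you write $-2\xi\int F^2\,\phi\,V^\mu V^\nu\nabla_\mu\nabla_\nu\phi$, but since $V^\mu V^\nu = F^2 U^\mu U^\nu$ already absorbs the $F^2$, the extra $F^2$ should not be there. This is harmless for the plan but would throw off the coefficient tracking you rightly flag as the main obstacle.
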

Here
\be
\xi_v= \frac{n-3}{2(n-2)} \,,
\ee 
where $\xi_v < 2\xi_c$ for any spacetime dimension $n>2$, while $\xi_c<\xi_v$ for $n\ge 4$. By $\xi_c$ we denote the conformal coupling in $n$ dimensions
\be
\xi_c=\frac{n-2}{4(n-1)} \,.
\ee
For minimally coupled fields on any spacetime, 
\be
\mathcal{B}_1=\frac{1}{n-2} m^2  \int dvol_g \, F^2(x)   \phi^2 \,, \text{ and } \mathcal{B}_2= \frac{1}{2(n-2)} \int dvol_g \, (\Box_g F^2(x)) \phi^2 \,.
\ee
For flat spacetimes the bounds of Theorem \ref{the:clasvol} become
\be \label{eqn:B1flat}
\mathcal{B}_1=\int dvol_g \left\{ \frac{1-2\xi}{n-2}  m^2 F^2(x) +\xi  [(\nabla_\mu V^\mu)^2+(\nabla_\mu V^\nu)(\nabla_\nu V^\mu)] \right\}  \phi^2 \,,
\ee
and
\be \label{eqn:B2flat}
\mathcal{B}_2=\int dvol_g \bigg\{ \frac{1-2\xi}{2(n-2)} (\Box F^2(x)) + \xi  [(\nabla_\mu V^\mu)^2+(\nabla_\mu V^\nu)(\nabla_\nu V^\mu)] \bigg\} \phi^2 \,.
\ee
In comparison, the worldline bound derived in \cite{Brown:2018hym} is
\bea
\label{eqn:cline}
\int_\gamma dt \, \rho_U\, f^2(t)&\geq& - \int_\gamma dt \bigg\{ \frac{1-2\xi}{n-2} m^2 f^2(t)+\xi \bigg(2( f'(t))^2+R_{\mu \nu} \dot{\gamma}^\mu \dot{\gamma}^\nu f^2(t) \nonumber\\
&& \qquad \qquad  - \frac{2\xi}{n-2} Rf^2(t) \bigg) \bigg\}\phi^2  \,,
\eea
where the EED is averaged over a segment of a timelike geodesic $\gamma$. We note that only in the case of worldvolume averaging can we have a bound without mass dependence and field derivatives. That kind of bound has another important advantage which we can see by rescaling the support of the smearing function. We first let $\fmax$ be the maximum amplitude of the field 
\be
\label{eqn:fmax}
\fmax=\sup_M |\phi | \,,
\ee
so we can take it out of the bound. Then we look at the bound $\mathcal{B}_2$ in flat spacetimes \footnote{This bound differs from the analogous one in \cite{Brown:2018hym} where the absolute values were incorrectly omitted.}
\bea
\int dvol_g \, \rho_U\, F^2(x) &\geq& - \fmax^2 \int dvol_g \bigg\{ \frac{1-2\xi}{2(n-2)}  (\left| \Box_g F^2(x) \right|) \nonumber\\
&&\qquad \qquad \qquad \qquad+ \xi  [(\nabla_\mu V^\mu)^2+\left|(\nabla_\mu V^\nu)(\nabla_\nu V^\mu)\right|] \bigg\} \,.
\eea
Next, we consider a translationally invariant unit timelike vector field $U^\mu$ and define the rescaled smearing function $F_\lambda$ for $\lambda>0$ to be
\be
F_\lambda(x)=\frac{F(x/\lambda)}{\lambda^{n/2}} \,,
\ee
so that its normalization is independent of the choice of $\lambda$. Then in the limit of large $\lambda$ the bound goes to zero and we have
\be
\liminf_{\lambda \to \infty} \int dvol_g \, \rho_U\, F^2_\lambda(x) \geq 0 \,,
\ee
thus establishing an averaged SEC (ASEC) for flat spacetimes. A similar calculation for the $\mathcal{B}_1$ bound gives a weaker, negative, bound in this case. The ASEC cannot be proven with the same assumptions for the worldline bound. In particular for 
\be
\fmax=\sup_\gamma |\phi | \,,
\ee
and
\be
f_{\lambda}(t)=\frac{f(t/\lambda)}{\sqrt{\lambda}} \,,
\ee
we get
\be
\liminf_{\lambda \to \infty} \int_\gamma dt \, \rho_U \, f^2_\lambda (t)  \geq -  \frac{1-2\xi}{n-2} m^2 \fmax^2 \,,
\ee
as shown in \cite{Brown:2018hym}. So for the wordline bound the long time average can be negative even for minimal coupling.

\subsection{The curvature bound}

To derive a bound for the singularity theorem, we need to apply the Einstein equation
\be
G_{\mu \nu}=8\pi T_{\mu \nu} \,.
\ee
Taking the trace of the Einstein equation $G_{\mu\nu}=8\pi T_{\mu\nu}$ gives
\begin{equation}\label{eq:trEins}
	\left(-\frac{n}{2}+1\right)R = 8\pi T \,.
\end{equation}
Using Eq.~\eqref{eqn:trmass} ``on shell'' and using a similar method as \cite{Brown:2018hym} we can show
\bea\label{eqn:trace_onshell}
T &=& 2(n-1)(\xi_c-\xi)  (\nabla\phi)^2 - \left(1+2(n-1)(\xi_c-\xi) \right)m^2\phi^2  \nonumber\\
&& -2(n-1)(\xi_c-\xi)\xi R \phi^2 \,.
\eea
Using Eqs.~\eqref{eq:trEins} and \eqref{eqn:trace_onshell} we have 
\begin{equation}
	\label{eqn:teqn}
	\frac{n-2}{16\pi} \left(1-8\pi\xi(1-\xi/\xi_c)\phi^2\right)R = -2(n-1)(\xi_c-\xi)  (\nabla\phi)^2 + \left(1+2(n-1)(\xi_c-\xi) \right)m^2\phi^2  \,.
\end{equation}
Next, we assume that $8\pi \xi \phi^2 \leq 1/2$. Physically, it makes sense to consider values of $\phi^2$ such that $8\pi \xi \phi^2 \ll 1$ \cite{Fliss:2023rzi}. Considering $8\pi \xi \phi^2$ close to $1$, means considering field values close to the Planck scale, when our approach breaks down. Under that assumption we have $1-8\pi\xi(1-\xi/\xi_c)\phi^2\geq 8\pi\xi\phi^2$. Using this and the value of $\xi_c$ equation \eqref{eqn:teqn} yields
\be
R\geq - \frac{ (\xi_c-\xi)}{\xi_c \xi \phi^2} (\nabla \phi)^2 \,.
\ee
Now we can replace the term including the Ricci scalar in the expression for EED of Eq.~\eqref{eqn:rhobox} `on shell'
\bea
\rho_U & \geq & (1-2\xi)U^\mu U^\nu (\nabla_\mu \phi) (\nabla_\nu \phi) +\frac{1-2\xi}{n-2} \phi \Box_g\phi-\frac{(\xi_c(1+2\xi)-\xi)}{\xi_c(n-2)} (\nabla \phi)^2 -2\xi \phi U^\mu U^\nu \nabla_\mu \nabla_\nu \phi \nonumber \\
&&+\xi R_{\mu \nu} U^\mu U^\nu \phi^2\,,
\eea
Using the Einstein equation and the definition of EED
\be
8\pi \rho_U= R_{\mu \nu} U^\mu U^\nu \,,
\ee
we have the purely geometric bound 	
\bea
R_{\mu \nu}U^\mu U^\nu (1-8\pi \xi \phi^2) & \geq & 8\pi \bigg((1-2\xi)U^\mu U^\nu (\nabla_\mu \phi) (\nabla_\nu \phi) +\frac{1-2\xi}{n-2} \phi \Box_g\phi  \nonumber \\
&&-\frac{(\xi_c(1+2\xi)-\xi)}{\xi_c(n-2)} (\nabla \phi)^2-2\xi \phi U^\mu U^\nu \nabla_\mu \nabla_\nu \phi\bigg)\,.
\eea
Using the identity of Eq.~\eqref{eqn:boxidentity} we have
\bea
R_{\mu \nu}U^\mu U^\nu (1-8\pi \xi \phi^2) & \geq & 8\pi \bigg((1-2\xi)U^\mu U^\nu (\nabla_\mu \phi) (\nabla_\nu \phi) -\frac{(2\xi_c-\xi)}{\xi_c(n-2)} (\nabla \phi)^2  \nonumber \\
&&+\frac{1-2\xi}{2(n-2)} \Box_g\phi^2-2\xi \phi U^\mu U^\nu \nabla_\mu \nabla_\nu \phi\bigg)\,.
\eea
Combining the first two terms gives
\bea
\label{eqn:comb}
(1-2\xi)U^\mu U^\nu (\nabla_\mu \phi) (\nabla_\nu \phi) -\frac{(2\xi_c-\xi)}{\xi_c(n-2)} (\nabla \phi)^2&=& k(\xi,n) U^\mu U^\nu (\nabla_\mu \phi)(\nabla_\nu \phi) \nonumber\\
&&+\frac{(2\xi_c-\xi)}{\xi_c(n-2)}h^{\mu\nu}(\nabla_\mu \phi)(\nabla_\nu \phi) \,,
\eea
where $h^{\mu \nu}=g^{\mu \nu}+U^\mu U^\nu$ is a positive semi-definite metric and 
\be
k(\xi,n)=\frac{ \xi_c((n-4)-2\xi(n-2))+\xi }{\xi_c(n-2)} \,.
\ee
Both terms in Eq.~\eqref{eqn:comb} are positive for $n>3$ as $k(\xi,4)=\xi$ and $k(\xi,n)>0$ for $n>4$ and $0\leq\xi \leq \xi_c$. Then, using the smearing function $F(x)$ we can write
\be
\int dvol_g F(x)^2 R_{\mu \nu}U^\mu U^\nu (1-8\pi \xi \phi^2)  \geq  -8\pi \int dvol_g F(x)^2 \bigg(-\frac{1-2\xi}{2(n-2)} \Box_g\phi^2+2\xi \phi U^\mu U^\nu \nabla_\mu \nabla_\nu \phi\bigg)\,.
\ee
Integrating by parts the second term gives
\bea
&& 2\xi \int dvol_g F(x)^2 \phi U^\mu U^\nu \nabla_\mu \nabla_\nu \phi=-2\xi \int dvol_g F(x)^2(\nabla_U \phi)^2-2\xi \int dvol_g \nabla_U (F(x)^2)\phi \nabla_U \phi \nonumber\\
&& \qquad  -2\xi \int dvol_g F(x)^2 \phi A^\nu \nabla_\nu \phi-2\xi \int dvol_g F(x)^2 \theta \phi (\nabla_U \phi) \,,
\eea
where we defined the expansion $\theta=\nabla_\mu U^\mu$ (note that this is $\mathrm{tr}(\nabla U)$ in index-free notation). Here $\nabla_U U^\mu=A^\mu$ is the acceleration.

From now on we will consider the geodesic flow emanating normally from the Cauchy surface $\Sigma$ as discussed in previous sections. Then the $U^\mu$ is tangent to the timelike geodesic parametrized by proper time $t$, $A=0$ and we have
\bea
\label{eqn:boundint}
&&\int dvol_g F(x)^2 R_{\mu \nu}U^\mu U^\nu (1-8\pi \xi \phi^2)  \geq \nonumber\\
&& \qquad \qquad \qquad-8\pi \int dvol_g  \bigg(\frac{1-2\xi}{2(n-2)}\phi^2 \Box_g F(x)^2-2\xi \frac{\partial (F(x)^2)}{\partial t} \phi \dot{\phi}-2\xi F(x)^2 \theta \phi\dot{\phi} \bigg)\,.
\eea
From Remark~\ref{rem: dAlembertop} we have
\be
\Box_g F^2(x)= \frac{1}{\sqrt{\det g_t}} \partial_t(\sqrt{\det g_t} \partial_t
(F(x)^2)) -\Delta_{g_t} F^2(x) \,.
\ee
Replacing $\Box_g F^2(x)$ in Eq.~\eqref{eqn:boundint}, integrating by parts and noting that $dvol_g=\sqrt{\det g_t} dtd\sigma(\x) $ gives
\bea
\label{eqn:boundint2}
&&\int dvol_g F(x)^2 R_{\mu \nu}U^\mu U^\nu (1-8\pi \xi \phi^2)  \geq -8\pi \int dvol_g  \bigg(\frac{1-2\xi}{2(n-2)}F(x)^2 (-\Delta_{g_t} \phi^2) \nonumber\\
&& \qquad \qquad -\frac{1}{(n-2)} 2 F(x) \dot{F}(x) \phi\,\dot{\phi}-2\xi F(x)^2 \theta \phi \dot{\phi} \bigg) \,.
\eea
 Since $|\phi|<(8\pi \xi)^{1/2}$ we can absorb the factor $(1-8\pi \xi \phi^2)^{1/2}$ in $F(x)$
\bea
&&\int dvol_g F(x)^2 R_{\mu \nu}U^\mu U^\nu \geq -8\pi \int dvol_g  \bigg(\frac{1-2\xi}{2(n-2)}\frac{F(x)^2}{ (1-8\pi \xi \phi^2)} (-\Delta_{g_t} \phi^2) \nonumber\\
&& \qquad \qquad -\frac{2}{(n-2)}  \frac{F(x) \dot{F}(x)}{ (1-8\pi \xi \phi^2) } (\phi\,\dot{\phi})-2\xi \frac{F(x)^2}{(1-8\pi \xi \phi^2)} \theta \phi \dot{\phi} \bigg) \,,
\eea
where we omitted a strictly positive term. Using the following definitions
\be
\label{eqn:fmaxprime}
\dot{\phi}_{\max}=\sup_M \left|\frac{\partial}{\partial t}\phi \right| \,,
\quad \Delta \fmax^2=\sup \left|\Delta_{g_t} \phi^2 \right|
\ee
along with Eq.~\eqref{eqn:fmax}, we have
\bea
\label{eqn:riccwtheta}
&&\int dvol_g F(x)^2 R_{\mu \nu}U^\mu U^\nu \geq -8\pi \int dvol_g  \bigg\{ \frac{F(x)^2}{ (1-8\pi \xi \fmax^2)}  \bigg(\frac{1-2\xi}{2(n-2)}\Delta \fmax^2 \nonumber\\
&& \qquad  + \frac{\fmax\,\dot{\phi}_{\max}}{T(n-2)}+2\xi |\theta| \fmax \dot{\phi}_{\max}\bigg) +\frac{\dot{F}(x)^2}{ (1-8\pi \xi \fmax^2) } \frac{T \fmax\,\dot{\phi}_{\max}}{(n-2)}  \bigg\} \,.
\eea
Here we also used the inequality of Eq.~\eqref{eqn:elineq}.

\subsection{A singularity theorem for the Einstein-Klein-Gordon theory}

To proceed we need to bound the absolute value of the expansion $\theta$. First we note that the singularity theorems \ref{theorem: m=1} or \ref{rem: m>1case} use a function $F \in C^{\infty}_c(M)$ with support entirely contained in $\Omega_T^+(\Regplus(T))$ such that
\begin{align}
    (F \circ \exp_{\Sigma})(t, \x) = f(t) h(\x) \,,
\end{align}
with $f \in C^{\infty}_c(\R)$ with support contained in $[0,T]$ and $h \in C^{\infty}_c(\Sigma)$ with support contained in $\Regplus(T)$. 

Now since $\theta(t,\x)=H(\exp_{\x}(t \vec{n}_{\x}))$ on $\Omega_T^+(\Regplus(T))$, we can use Lemma \ref{lem: backwarddalembertcomparison} to obtain
\begin{align}
    \label{eq:theatboundbelow}
\theta(t,\x) \geq -(n-1)\sqrt{|\kappa|} \coth(\eta \sqrt{|\kappa|})=C^{\Box-} \;\;\;\mathrm{on}\;\Omega_T^+(\Regplus(T))\cong [0,T]\times \Regplus(T).
\end{align}
For an upper bound we may use Lemma s \ref{lem: forwarddalembertcomparison}  together with Remark \ref{rem: comparisonconstants} (as we assume $H\leq \beta <0$ and we can always assume $ \beta >-(n-1)\sqrt{|\kappa|}$ by making it smaller if necessary), i.e., standard mean curvature comparison, to obtain
\begin{align}
    \label{eq:theatboundabove}
\theta(t,\x) &=H((\exp_{\x}(t \vec{n}_{\x}))) \leq H_{\kappa, \beta }(T)= C^{\Box+} \;\;\;\mathrm{on}\;\Omega_T^+(\Regplus(T))\cong [0,T]\times \Regplus(T).
\end{align}

So, as in Remark \ref{rem: dAlembertop},
\begin{align}
    \label{eq:theatbound}
|\theta(t,\x)| &\leq \max\{C^{\Box-},|C^{\Box+}|\} = (n-1)\sqrt{|\kappa|} \coth(\eta \sqrt{|\kappa|}).
\end{align} 

From this \eqref{eqn:riccwtheta} becomes 
	\bea
	\label{eqn:riccithree}
	&&\int dVol \, R_{\mu \nu} U^\mu U^\nu \, F^2(x)  \geq - Q_1  ||F||^2 -Q_2\| \dot{F}\|^2 \,,
	\eea
for any $F \in C^{\infty}_c(M)$ with support entirely contained in $\Omega_T^+(\Regplus(T))$, where the constants are
\be\label{eqn:q1}
Q_1= \frac{8\pi}{1-8\pi \xi \fmax^2} \bigg[\frac{1-2\xi}{2(n-2)} \Delta \fmax+\frac{\fmax \dot{\phi}_{\max}}{T(n-2)}+2\xi(n-1)\sqrt{|\kappa|} \coth{(\eta \sqrt{|\kappa|})} \fmax \dot{\phi}_{\max} \bigg] \,,
\ee
and
\be
\label{eqn:q2}
Q_2=\frac{8\pi}{1-8\pi \xi \fmax^2}  \frac{1}{n-2} T\fmax \dot{\phi}_{\max} \,.
\ee
We note that this bound is mass independent unlike the worldline bound derived in Ref.~\cite{Brown:2018hym}. The appearance of spatial derivatives of the field can be considered small or negligible in some cases, for example those of cosmological spacetimes. We note that in geometric units the constant $Q_1$ has dimensions of inverse time square while the constant $Q_2$ is dimensionless.

We are ready to state a singularity theorem for the non-minimally coupled classical scalar field.  

\begin{cor}
Let $(M, g, \phi)$ be a solution to the Einstein–Klein–Gordon equation in $n > 2$ spacetime  dimensions. Let $(M, g, \phi)$ be globally hyperbolic with a smooth spacelike Cauchy surface  $\Sigma$. Suppose $\phi$ has coupling $\xi \in [0, 
\xi_c]$. Assume hypotheses (i) and (iii) from Theorem \ref{theorem: m=1} 
with constants $Q_1$ and $Q_2$ given by Eqs.~\eqref{eqn:q1} and 
\eqref{eqn:q2}. The constant $\fmax$ is defined in \eqref{eqn:fmax} and obeys the inequality $8\pi \xi \fmax^2 \leq 1/2$. The constants $\dot{\phi}_{\max}$ and $\Delta \fmax^2$ are defined as in \eqref{eqn:fmaxprime}. 
Then there exists a $\tau$ such that no future-directed timelike curve emanating from $\Sigma$ has length greater than $\tau$ and $(M,g)$ is future timelike geodesically incomplete.
\end{cor}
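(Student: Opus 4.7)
The plan is to apply Theorem \ref{theorem: m=1}, or rather its straightforward variant noted in point~(1) of Remark \ref{rem: m>1case}, by collecting the three hypotheses of the theorem. Hypotheses~(i) and~(iii) are directly assumed in the statement of the corollary, so essentially all of the work lies in verifying the worldvolume Ricci integral inequality~(ii).

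For~(ii), I would invoke the bound \eqref{eqn:riccithree} derived above, which reads precisely
\[
 \int dvol_g \, R_{\mu \nu} U^\mu U^\nu \, F^2(x) \geq -Q_1 \|F\|^2 - Q_2 \|\dot F\|^2,
\]
with $Q_1$ and $Q_2$ as in \eqref{eqn:q1}, \eqref{eqn:q2}. This is exactly the form required by condition~(ii) of Theorem \ref{theorem: m=1}. The ingredients feeding into \eqref{eqn:riccithree} are: the classical worldvolume EED bound $\mathcal{B}_2$ of Theorem \ref{the:clasvol}; the Einstein equation $8\pi T_{\mu\nu}=G_{\mu\nu}$ (to convert $\rho_U$ into $\Ric(U,U)$ and to eliminate the Ricci scalar via \eqref{eqn:teqn}, which is where the constraint $\xi \in [0,\xi_c]$ and $8\pi \xi \fmax^2 \le 1$ enters); the expansion of $(\nabla_\mu V^\mu)^2 + (\nabla_\mu V^\nu)(\nabla_\nu V^\mu)$ along the geodesic congruence $U$ (using $A^\mu = 0$) via \eqref{eqn:extone}--\eqref{eqn:exttwo}; and the two-sided expansion bound $|\theta| \leq (n-1)\sqrt{|\kappa|}\coth(\eta\sqrt{|\kappa|})$ from \eqref{eq:theatbound}, which relies on forward/backward d'Alembert comparison (Lemmas \ref{lem: forwarddalembertcomparison}, \ref{lem: backwarddalembertcomparison}) and hence is only available on $\Omega_T^+(\Regplus(T))$. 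The global bounds \eqref{eqn:fmax}, \eqref{eqn:fmaxprime} on $\phi$ and its derivatives are then used to absorb all field-dependent quantities into constants, with the cross terms split via $ab\leq \tfrac{1}{T}a^2 + T b^2$ into the $\|F\|^2$ and $\|\dot F\|^2$ channels.

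The main obstacle is really a matter of scope: \eqref{eqn:riccithree} is only established for $F\in C^\infty_c(\Omega_T^+(\Regplus(T)))$, whereas Theorem \ref{theorem: m=1}~(ii) literally asks for the bound on all of $C^\infty_c(\mathcal{M}^+)$. This is resolved by appealing to point~(1) of Remark \ref{rem: m>1case}, which explicitly records that the proof of Theorem \ref{theorem: m=1} only uses the bound on $F$ supported in $\Omega_T^+(\Regplus(T))$. A secondary bit of bookkeeping is that $Q_1,Q_2$ depend on $n,\kappa,T,\eta$ and the $\phi$-bounds, and these constants reappear inside $\nu_*(n,\kappa,\tau)$; one has to check that the mean-curvature hypothesis~(iii) -- which is what determines the $\tau$ in the conclusion -- can be consistently imposed with the $Q_i$ coming from the scalar field. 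Once this is in place, Theorem \ref{theorem: m=1} applies verbatim and produces future timelike geodesic incompleteness with all curves from $\Sigma$ of length at most $T := \tau$.
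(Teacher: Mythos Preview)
Your proposal is correct and matches the paper's own proof essentially line for line: the paper also observes that \eqref{eqn:riccithree} supplies hypothesis~(ii) of Theorem~\ref{theorem: m=1} for $F$ supported in $\Omega_T^+(\Regplus(T))$, invokes Remark~\ref{rem: m>1case}(1) to justify that this restricted class suffices, and then applies Theorem~\ref{theorem: m=1} directly. Your additional unpacking of the ingredients behind \eqref{eqn:riccithree} and the parameter bookkeeping is accurate but more detailed than what the paper spells out.
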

\begin{proof}
First we note that $(M,g,\phi)$ obeys the worldvolume bound of Eq.~\eqref{eqn:riccithree} for any $F \in C^{\infty}_c(M)$ with support entirely contained in $\Omega_T^+(\Regplus(T))$ (which is sufficient by Remark \ref{rem: m>1case}(1)). Thus the condition (ii) of Theorem \ref{theorem: m=1} is also obeyed with constants $Q_1$ and $Q_2$ given by Eqs.~\eqref{eqn:q1} and \eqref{eqn:q2}. The result follows from Theorem \ref{theorem: m=1}.
\end{proof}

\section{Application}
\label{section: application}

In this section we estimate the required extrinsic curvature to have past timelike geodesic incompleteness for the non-minimally coupled scalar field in a simple cosmological model. The approach is based on the analysis performed in \cite{Brown:2018hym} and \cite{Fewster:2019bjg}. The results will be useful to compare our theorem with the worldline ones proven in the previous references.

One important estimate that needs to be made is the maximum allowed field value $\fmax$. To do that we follow Ref.~\cite{Brown:2018hym} and estimate it using a hybrid model. The idea is to estimate the (classical) $\fmax$ using the Wick square of a KMS state in Minkowski spacetime. That way we can connect the field value with a temperature $T$. 

Assuming the scalar field describes an elementary particle with mass $m$, a maximum temperature where the model can be trusted is taken as much smaller than the Compton temperature $T_m=mc^2/k$ where $k$ is the Boltzmann constant. Then $T_{\max}=10^{-2}T_m \ll T_m$. With these considerations we have \cite{Brown:2018hym}
\be
\fmax^2\sim \langle \nord{\phi^2} \rangle_{T_{\max}}\sim 5\times 10^{-4} \frac{m^2 c^3}{\bar{h}} K_1(100) \,,
\ee
where $K_x$ is the modified Bessel function. 

On dimensional grounds the ratio $\dot{\phi}_{\max}/\fmax$ is proportional to $c/\lambda$ where $\lambda$ is taken as the Compton length of elementary particles. Then
\be
\dot{\phi}_{\max}=\fmax \frac{mc^2}{h} \,.
\ee
Since we will consider a cosmological setting we assume the field is spatially homogeneous and so $\Delta \fmax^2=0$. The values of the field and its derivative only depend on the mass of an (elementary) particle. To compare our results with Refs.~\cite{Brown:2018hym} and \cite{Fewster:2019bjg} we use the pion, the proton and the Higgs. The pion has mass $2.40 \times 10^{-28}$kg, the proton $1.67 \times 10^{-27}$kg and the Higgs $2.24 \times 10^{-25}$ kg.

Since we have been using geometric units, we need to restore the dimensions for this physical application. The quantity $C^{\square}_{\max}$ becomes
\be
 C^{\square}_{\max} = \max\{C^{\Box-},|C^{\Box+}|\} = 3\frac{\sqrt{G|\kappa|}}{c}\coth(\eta \frac{\sqrt{G|\kappa|}}{c}) \,,
\ee
so it has dimensions of inverse time. Similarly we restore dimensions in the constants $Q_1$ and $Q_2$. Both constants are multiplied by $G/c^4$, meaning $Q_2$ is dimensionless and $Q_1$ has dimensions of inverse time square. 

After restoring the dimensions we proceed with an estimation and optimization of the required extrinsic curvature $\nu_{\tau_0,\eta}$ of Theorem~\ref{theorem: m=1}. First we optimize in terms of $\tau_0$ with the condition $T \gg \tau_0$. Then the optimal $\tau_0$, meaning the one that gives the minimum of $\nu_{\tau_0,\eta}$ is
\be
\tau_0^{\text{opt}}=\sqrt{\frac{3}{2}}c\sqrt{\frac{Q_2(1+TC^{\square}_{\max}/2)}{G|\kappa|}} \,.
\ee

To proceed we will make some further approximations. First, for the rest of the calculation we assume $\xi=1/6$, the conformal coupling in $4$-dimensions. In this case any coupling in the allowed range is expected to give similar results. Then we assume that 
\be
y \equiv \left(\eta \frac{\sqrt{G|\kappa|}}{c}\right) \ll 1 \,.
\ee
With this assumption we can take $\cosh{y} \approx 1$ and $\coth{y} \approx y^{-1}$. The maximum allowed value of $\eta$ so that the approximation is valid is $10^{-2} c/\sqrt{G\kappa}$. We fix the value of $\eta$ to be equal to that and we will revisit it to make sure $\eta \ll T$ as expected.

With these assumptions, for each particle, the required extrinsic curvature is a function of $T$ and $\kappa$. As we expect $\eta \ll T$ we will refer to $T$ as the maximum time to singularity. To compare the results of our theorem to cosmological data we will use results from the PLANCK collaboration \cite{Planck:2018vyg}.  The Hubble parameter is the same as the extrinsic curvature of a cosmological spacelike Cauchy surface and today it has an estimated value of $H_0=2.18 \times 10^{-18} s^{-1}$ while the age of the universe is $t_0=4.35 \times 10^{17}s$ \footnote{There is an experimental ambiguity regarding the value of the Hubble constant. Here we use the value measured using the CMB perturbations.}. The goal is to find the largest possible $\kappa$ so that $\nu \leq H_0$ and $T$ is as close as possible to $t_0$. 

We note that $\nu$ has a term of the form $\sim 3/T$ similar to the worldline theorem of Ref.~\cite{Fewster:2019bjg}. That means $T$ has to be larger than $1.5 \times 10^{18} s$ to have the required extrinsic curvature smaller than the measured one. Considering that we want a prediction of the singularity time we examine values of $T$ between this value and $10^{20} s$, about three orders of magnitude larger than $t_0$. For each elementary particle and within this range of $T$ we found the maximum possible value of $\kappa$ for our theorem to apply. The results are presented in Tab.~\ref{tab:particles}. 

\begin{table}
	\begin{center}\begin{tabular}{|c|c|c|c|c|} \hline
			Particle & Mass in kg &  $T$ in $s$ & $\kappa$ in $J/m^3$ & $\eta$ in $s$  \\ \hline \hline
			Pion & $2\times 10^{-28}$  & $3.05 \times 10^{18}$ & $5.18 \times 10^{20}$ & $16.18$  \\ \hline
			Proton & $1.67 \times 10^{-27}$  & $2.94 \times 10^{18}$ & $1.07 \times 10^{19}$ & $112.52$  \\ \hline
			Higgs & $2.23 \times 10^{-25}$  & $3.06 \times 10^{18}$ & $5.93 \times 10^{14}$ & $1.51 \times 10^{4}$ \\ \hline
		\end{tabular}\caption{The masses, maximum time to singularity and required $\kappa$ to have past geodesic incompleteness for three elementary particles. We note that in all cases $\eta \ll T$.}
		\label{tab:particles}
	\end{center}
\end{table} 

Comparing our results with the ones from Ref.~\cite{Fewster:2019bjg} where the same model was examined using a worldline inequality we find several differences. First of all with the worldvolume inequality we are able to predict past geodesic incompleteness for the Higgs which was impossible in the worldline case. Second, our method doesn't require the SEC to hold near the Cauchy surface, thus this assumption was dropped and we are able to apply the theorem using the spacelike cosmological Cauchy surface today. The drawback of our theorem is the requirement for a global pointwise condition. However, we find that the negative energy density allowed at each spacetime point is very large, comparable to the energy density at the very early universe.

\section{Discussion}
\label{sec:discussion}

In this work we prove the first singularity theorems using a worldvolume inequality as an assumption. Using area comparison results in a Lorentzian manifold, we are able to derive bounds on individual timelike geodesics from worldvolume integrals. Additionally, using index form methods developed in \cite{Fewster:2019bjg} we prove a singularity theorem with a worldvolume integrated energy condition. Our results have physical applications, and as an example we apply our theorem in the case of the classical non-minimally coupled field, which violates the strong energy condition. Finally, we apply our theorem using a toy cosmological model and compare our results with the ones from worldline inequalities. We find that the worldvolume theorem could give better results especially in the case of elementary particles with large mass such as the Higgs boson. 

The obvious extension of this work is to apply the theorem to a quantum field theory using a quantum energy inequality as an assumption as done in Ref.~\cite{Fewster:2021mmz}. There are currently a few obstacles for this application: Theorem \ref{theorem: m=1} was proven only for one timelike derivative on the smearing function while the quantum inequality requires two. Perhaps more importantly, the theorem has as an assumption a pointwise energy condition. Even though the negative energy allowed can be large, in the context of quantum field theory there are no lower energy bounds for spacetime points. Thus the pointwise condition would need to be replaced with an integral one.

The null geodesic incompleteness case seems even more interesting than the timelike one. As it is impossible to bound the null energy on single geodesic segments \cite{Fewster:2002ne} but it is possible on finite null surfaces \cite{Fliss:2021phs} , it would be of great interest to prove a Penrose-type theorem for worldvolume inequalities. One mathematical difficulty arising in the null case is that there are no direct Riemannian analogues to draw from, however there are null area comparison results, cf. \cite{choquet2009light, grant2011areas}, controlling the $(n-2)$-dimensional area of constant-affine-parameter slices of the lightcone. This work is currently in progress.

\begin{acknowledgments}
E-AK is supported by the ERC Consolidator Grant QUANTIVIOL. This work is part of the $\Delta$-ITP consortium, a program of the NWO that is funded by the Dutch Ministry of Education, Culture and Science (OCW). AO acknowledges the support by the project P33594 of the Austrian Science Fund FWF as well as the ÖAW-DOC scholarship of the Austrian Academy of Sciences. MG is supported by the German research foundation (DFG) focus program SPP2026 {\em Geometry at Infinity} and YS was supported by the Uni:Docs program of the University of Vienna.
The authors would like to thank Chris Fewster, Michael Kunzinger and Roland Steinbauer for helpful discussions. They are also grateful to the Mathematical Research Institute of Oberwolfach (MFO) for hosting them for two weeks in April 2022 during the course of the ``Research in Pairs" program, which contributed to this project.
\end{acknowledgments}

\newpage

\bibliographystyle{utphyssort}
\bibliography{biblio}

\end{document}